\newtheorem{theorem}{Theorem}
\newtheorem{lemma}{Lemma}
\newtheorem{corollary}{Corollary}
\newtheorem{conjecture}[theorem]{Conjecture}
\renewcommand{\eprint}[1]{\href{http://arxiv.org/abs/#1}{#1}}
\DeclarePairedDelimiter\bra{\langle}{\rvert}
\DeclarePairedDelimiter\ket{\lvert}{\rangle}
\DeclarePairedDelimiterX\braket[2]{\langle}{\rangle}{#1 \delimsize\vert #2}
\newcommand{\floor}[1]{\lfloor{#1}\rfloor}
\DeclareMathOperator{\poly}{poly}
\DeclareMathOperator{\spn}{span}
\DeclareMathOperator{\tr}{tr}
\DeclareMathOperator{\Tr}{Tr}
\newcommand{\F}{\mathbb{F}}
\newcommand{\FF}{\mathbb{F}}
\newcommand{\CC}{\mathbb{C}}
\newcommand{\NN}{\mathbb{N}}
\renewcommand{\i}{\mathrm{i}}
\newcommand{\e}{\mathrm{e}}
\newcommand{\good}{\mathrm{good}}
\newcommand{\z}{Z}
\newcommand{\chr}{e}
\newcommand{\PGM}{\textsc{pgm}\xspace}
\newcommand{\HSP}{\textsc{hsp}\xspace}
\newcommand{\MAC}{\textsc{mac}\xspace}
\newcommand{\MACs}{\textsc{mac}s\xspace}
\newcommand{\eq}[1]{\hyperref[eq:#1]{(\ref*{eq:#1})}}
\renewcommand{\sec}[1]{\hyperref[sec:#1]{Section~\ref*{sec:#1}}}
\newcommand{\thm}[1]{\hyperref[thm:#1]{Theorem~\ref*{thm:#1}}}
\newcommand{\lem}[1]{\hyperref[lem:#1]{Lemma~\ref*{lem:#1}}}
\newcommand{\itm}[1]{\hyperref[itm:#1]{\ref*{itm:#1}}}
\newcommand{\app}[1]{\hyperref[app:#1]{Appendix~\ref*{app:#1}}}
\newcommand{\conj}[1]{\hyperref[conj:#1]{Conjecture~\ref*{conj:#1}}}
\newcommand{\tsum}{\textstyle\sum}
\renewcommand{\vec}[1]{{\bm{#1}}}
\newcommand{\cJ}{{\mathcal J}}
\newcommand{\cR}{{\mathcal R}}
\newcommand{\cV}{{\mathcal V}}
\title{Optimal quantum algorithm for polynomial interpolation}
\author{Andrew M.\ Childs$^{*,\dagger}$}
\author{Wim van Dam$^{\ddag}$}
\author{Shih-Han Hung$^{\dagger}$}
\author{Igor E.\ Shparlinski$^{\S}$}
\begin{document}

\renewcommand*{\thefootnote}{\fnsymbol{footnote}}
\footnotetext[1]{\textsc{Department of Computer Science and Institute for Advanced Computer Studies, University of Maryland}}
\footnotetext[2]{\textsc{Joint Center for Quantum Information and Computer Science, University of Maryland}}
\footnotetext[3]{\textsc{Departments of Computer Science and Physics, University of California, Santa Barbara}}
\footnotetext[4]{\textsc{Department of Pure Mathematics, University of New South Wales} \vspace{-2ex}}
\renewcommand*{\thefootnote}{\arabic{footnote}}


\begin{abstract}
We consider the number of quantum queries required to determine the coefficients of a degree-$d$ polynomial over $\F_q$.  A lower bound shown independently by Kane and Kutin and by Meyer and Pommersheim shows that $d/2+1/2$ quantum queries are needed to solve this problem with bounded error, whereas an algorithm of Boneh and Zhandry shows that $d$ quantum queries are sufficient.  We show that the lower bound is achievable: $d/2+1/2$ quantum queries suffice to determine the polynomial with bounded error.  Furthermore, we show that $d/2+1$ queries suffice to achieve probability approaching $1$ for large $q$.  These upper bounds improve results of Boneh and Zhandry on the insecurity of cryptographic protocols against quantum attacks.  We also show that our algorithm's success probability as a function of the number of queries is precisely optimal.  Furthermore, the algorithm can be implemented with gate complexity $\poly(\log q)$ with negligible decrease in the success probability.
We end with a conjecture about the quantum query complexity of multivariate polynomial interpolation.
\end{abstract}

\maketitle

\section{Introduction}
\label{sec:intro}

Let $f(X) = c_d X^d +\cdots + c_1 X + c_0\in\FF_q[X]$ be an unknown polynomial of degree $d$, specified by its coefficient vector $c \in \F_q^{d+1}$.
Suppose $q$ and $d$ are known\footnote{We assume $q>d$ so that different coefficients correspond to distinct functions $f\colon \FF_q \to \FF_q$.} and we are given a black box that evaluates $f$ on any desired $x \in \F_q$.  In the \emph{polynomial interpolation problem}, our goal is to learn $f$---that is, to determine the vector $c$---by querying this black box.  We would like to determine how many queries are required to solve this problem.

The classical query complexity of polynomial interpolation is well known: $d+1$ queries to $f$ are clearly sufficient and are also necessary to determine the polynomial, even with bounded error.  Shamir~\cite{Sha79} used this fact to construct a cryptographic protocol that divides a secret into $d+1$ parts such that knowledge of all the parts can be used to infer the secret, but any $d$ parts give no information about the secret. The security of this protocol relies on the fact that if $f$ is chosen uniformly at random, and if we only know $d$ function values $f(x_1),\dots,f(x_d)$, then we cannot guess the value $f(x_{d+1})$ for a point $x_{d+1} \notin \{x_1,\ldots,x_d\}$ with probability greater than $1/q$ (that is, there is no advantage over random guessing). This example motivates understanding the query complexity of polynomial interpolation precisely, since a single query can dramatically increase the amount of information that can be extracted.

The quantum query complexity of polynomial interpolation has also been studied previously.  Kane and Kutin \cite{KK11} and Meyer and Pommersheim \cite{MP11} independently showed that $d/2 + 1/2$ quantum queries are needed to solve the problem with bounded error.  Furthermore, Kane and Kutin conjectured that $d+1$ quantum queries might be necessary.  This was refuted by Boneh and Zhandry, who showed that $d$ quantum queries suffice to solve the problem with probability\footnote{While the notation $O(\cdot)$ only indicates an asymptotic upper bound on the absolute value, we sometimes write $1 - O(\cdot)$ to indicate a bound on a quantity that is at most $1$.} $1 - O(1/q)$ \cite{BZ13}.  To show this, they described a $1$-query quantum algorithm that determines a linear polynomial with probability $1 - O(1/q)$.  The result for general $d$ follows because $d-1$ classical queries can be used to reduce the case of a degree-$d$ polynomial to that of a linear polynomial.  However, this work left a substantial gap between the lower and upper bounds.

Here we present an improved quantum algorithm for polynomial interpolation.  We show that the aforementioned lower bounds are tight: with $d$ fixed, $k = d/2 + 1/2$ queries suffice to solve the problem with constant success probability.  While the success probability at this value of $k$ has a $q$-independent lower bound, it decreases rapidly with $k$, scaling like $1/k!$.  This raises the question of how the success probability increases as we make more queries.  We show that there is a sharp transition as $k$ is increased: in particular, with $k = d/2 + 1$ queries, the algorithm succeeds with a probability that approaches $1$ for large $q$.

Our algorithm is motivated by the pretty good measurement (\PGM) approach to the hidden subgroup problem (\HSP) \cite{BCD05}.  In this approach, one queries the black box on uniform superpositions to create \emph{coset states} and then makes entangled measurements on several coset states to infer the hidden subgroup.  As in the \PGM approach (and in other approaches to the \HSP using the so-called standard method), our algorithm makes nonadaptive queries to the black box and performs collective postprocessing.  Also, similarly to previous analysis of the \PGM approach, we can express our success probability in terms of the number of solutions of a system of polynomial equations.

However, our approach to polynomial interpolation also has significant differences from the \PGM approach to the \HSP. In particular, we introduce a different way to query the black box that simplifies both the algorithm and its analysis.  In the \PGM approach, we query the black box on a uniform superposition and then uncompute uniform superpositions over certain sets.  For polynomial interpolation, we instead query a carefully-chosen non-uniform superposition of inputs so that the subsequent uncomputation is classical.  Furthermore, the success probability of our method is higher, and its analysis is more straightforward, than if we used a direct analog of the \PGM approach.  We hope that these techniques will prove useful for other quantum algorithms, perhaps for the hidden subgroup problem or for other applications of the \PGM approach \cites{CD05,DDW07}.

We also show that our strategy is precisely optimal: for any number of queries $k$, we describe a $k$-query algorithm with the highest possible success probability.  We give a simple algebraic characterization of this success probability, as follows.

\begin{theorem}\label{thm:sucprob}
The maximum success probability of any $k$-query quantum algorithm for interpolating a polynomial of degree $d$ over $\F_q$ is $|R_k|/q^{d+1}$, where $R_k := \z(\F_q^k \times \F_q^k)$ is the range of the function $\z\colon \F_q^k \times \F_q^k \to \F_q^{d+1}$ defined by $\smash{\z(x,y)_j := \sum_{i=1}^k y_i x_i^j}$ for $j \in \{0,1,\ldots,d\}$.
\end{theorem}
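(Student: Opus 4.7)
The plan is to cast any $k$-query algorithm in a canonical form indexed by $(x,y) \in \F_q^k \times \F_q^k$, derive the upper bound from the dimension of the subspace spanned by the possible output states, and construct an explicit algorithm that saturates it.

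First I would reduce to the phase-oracle model: Fourier-transforming the answer register converts the standard oracle $|x,b\rangle \mapsto |x,b+f(x)\rangle$ into $P_f : |x,y\rangle \mapsto \chi(y\,f(x))|x,y\rangle$ for a fixed nontrivial additive character $\chi$ of $\F_q$. Writing $P_f = \sum_{x,y}\chi(y\,f(x))\,\Pi_{x,y}$ with commuting projectors $\Pi_{x,y} = |x,y\rangle\langle x,y| \otimes I_W$ and expanding $U_k P_f U_{k-1} P_f \cdots P_f U_0 |\psi_0\rangle$ by inserting a complete set of $\Pi_{x_i,y_i}$'s before each of the $k$ queries, the final state of an arbitrary $k$-query algorithm acquires the form
\[ |\psi_c\rangle = \sum_{(x,y) \in \F_q^k \times \F_q^k} \chi(c \cdot \z(x,y))\,|\eta_{x,y}\rangle \]
for some fixed vectors $|\eta_{x,y}\rangle$ independent of $c$, using the identity $\sum_i y_i f(x_i) = c \cdot \z(x,y)$.

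Grouping by $z_0 := \z(x,y)$ gives $|\psi_c\rangle = \sum_{z_0 \in R_k}\chi(c\cdot z_0)\,|\phi_{z_0}\rangle$ with $|\phi_{z_0}\rangle := \sum_{(x,y)\,:\,\z(x,y)=z_0}|\eta_{x,y}\rangle$, so every $|\psi_c\rangle$ lies in the subspace $V := \spn\{|\phi_{z_0}\rangle : z_0 \in R_k\}$ of dimension at most $|R_k|$. For any POVM $\{M_c\}$, replacing $M_c$ by $\Pi_V M_c \Pi_V$ preserves every $\langle\psi_c|M_c|\psi_c\rangle$, and the compressed operators satisfy $\sum_c\tr(\Pi_V M_c\Pi_V)=\tr(\Pi_V) \le |R_k|$ together with $\langle\psi_c|M_c|\psi_c\rangle \le \tr(\Pi_V M_c\Pi_V)$ (using $M_c \succeq 0$ and $\|\psi_c\|=1$). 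Averaging over the uniform prior $p_c=1/q^{d+1}$ then yields $P_{\mathrm{succ}} \le |R_k|/q^{d+1}$. To show tightness, I would exhibit the algorithm that prepares coefficients $\alpha_{x,y}$ satisfying $\sum_{(x,y)\,:\,\z(x,y)=z_0}|\alpha_{x,y}|^2 = 1/|R_k|$ for every $z_0 \in R_k$ (any distribution uniform on the $\z$-fibers), applies $k$ phase queries to $\sum_{x,y}\alpha_{x,y}|x,y\rangle$, and measures with POVM elements $M_c = q^{-(d+1)}|v_c\rangle\langle v_c|$, where $|v_c\rangle := \sum_{z_0\in R_k}\chi(c\cdot z_0)\,|\phi_{z_0}\rangle/\|\phi_{z_0}\|$. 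With this choice the $|\phi_{z_0}\rangle$ are mutually orthogonal of common norm $1/\sqrt{|R_k|}$, the completeness $\sum_c M_c = I_V$ follows from orthogonality of additive characters on $\F_q^{d+1}$, and a short computation shows that $\langle\psi_c|M_c|\psi_c\rangle = |R_k|/q^{d+1}$ for every $c$, matching the upper bound.

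The main obstacle is the canonical-form step: one must verify that the path-sum expansion handles arbitrary adaptive algorithms with workspace registers, not only nonadaptive query strategies, so that the dimension upper bound genuinely applies to every $k$-query protocol. Once that form is in hand, the remainder is a short dimension argument plus a character-sum calculation.
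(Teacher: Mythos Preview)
Your proposal is correct and follows essentially the same approach as the paper: the upper bound via the canonical path-sum expansion and the dimension of $\spn\{|\phi_{z_0}\rangle\}$ matches the paper's Lemmas~1 and~2, and your explicit algorithm is a mild generalization of the paper's (which picks a single representative $(x,y)\in T_k$ per fiber, computes $\z$ in place, and measures in the Fourier basis rather than using a POVM). The paper's Lemma~2 carries out exactly the adaptive-with-workspace path-sum you flag as the main obstacle, so that step goes through as you anticipate.
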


We present an explicit quantum algorithm that achieves this success probability, and we show that no algorithm can do better. We establish optimality with an argument based on the dimension of the space spanned by the possible output states, which appears to be distinct from arguments using the two main approaches to proving limitations on quantum algorithms, the polynomial and adversary methods. Instead, our approach is closely related to a linear-algebraic lower bound technique of Radhakrishnan, Sen, and Venkatesh \cite{RSV02} and to the ``rank method'' of Boneh and Zhandry \cite{BZ13}.

We characterize the query complexity by proving bounds on $|R_k|$, as follows.

\begin{theorem}\label{thm:rangesize}
For any fixed positive integer $d$, the success probability of \thm{sucprob} is
\begin{enumerate}[itemsep=0pt,topsep=2pt,label=\textup{(\roman*)},ref=(\roman*)]
\item $|R_k|/q^{d+1} = \frac{1}{k!}(1 - O(1/q))$ if $d$ is odd and $k=\frac{d}{2}+\frac{1}{2}$, and \label{itm:rangesize_threshold}
\item $|R_k|/q^{d+1} = 1 - O(1/q)$ if $d$ is even and $k=\frac{d}{2} + 1$. \label{itm:rangesize_aboveth}
\end{enumerate}
\end{theorem}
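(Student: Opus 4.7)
To each $(x,y) \in \F_q^k \times \F_q^k$ with distinct $x_i$'s and nonzero $y_i$'s (the ``generic'' subset $G_k$, of size $q(q{-}1)\cdots(q{-}k{+}1)(q{-}1)^k = q^{2k}(1-O(1/q))$), I associate the rational function $R_{x,y}(T) = \sum_{i=1}^{k} y_i/(1-x_i T) = P(T)/Q(T)$, with $Q(T) = \prod_i(1-x_i T)$ so that $Q(0)=1$, $\deg Q \leq k$, $\deg P \leq k{-}1$. The entries $\z(x,y)_j$ are the first $d{+}1$ Taylor coefficients of $R_{x,y}$, and partial fractions make $(x,y) \mapsto (P,Q)$ exactly $k!$-to-$1$ on $G_k$. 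Any $(x,y) \notin G_k$ reduces (by combining equal $x_i$'s and deleting zero $y_i$'s) to an element of $G_{k'}$ for some $k'<k$ with identical image under $\z$.

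\textbf{Part (i).} Here $d+1=2k$. Two rational functions of type $[k{-}1/k]$ with $Q(0)=Q'(0)=1$ agreeing on $2k$ Taylor coefficients satisfy $\deg(PQ'-P'Q)\leq 2k-1$ and $PQ'-P'Q\equiv 0 \pmod{T^{2k}}$, hence $PQ'=P'Q$. Thus $\z|_{G_k}$ is injective modulo $S_k$, giving $|\z(G_k)|=|G_k|/k! = q^{2k}(1-O(1/q))/k!$. The reductions from $\F_q^{2k}\setminus G_k$ contribute at most $\sum_{k'<k}|G_{k'}|/k'! = O(q^{2k-2})$ further image points, yielding $|R_k|/q^{2k}=\tfrac{1}{k!}(1-O(1/q))$.

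\textbf{Part (ii).} Now $d+1=2k-1$: two rational functions agree on this many coefficients iff $PQ'-P'Q = cT^{2k-1}$ for some $c\in\F_q$, yielding a $1$-parameter Pad\'e family $\{(P_c,Q_c)\}_{c\in\F_q}$ for each $p$. The plan is to leverage Part~(i) via projection: writing $R^+_k\subseteq\F_q^{2k}$ for the Part-(i) image (of size $q^{2k}(1-O(1/q))/k!$) and $\pi$ for the projection dropping the last coordinate, $R_k=\pi(R^+_k)$. A direct expansion $R_c-R_0 \propto cT^{2k-1}/(Q_0Q_c)$ shows $[T^{2k-1}]R_c$ is an affine bijection $\F_q \to \F_q$ as a function of $c$, so the fiber $\pi^{-1}(p)\cap R^+_k$ has size exactly $g(p) := |\{c\in\F_q : (P_c,Q_c)\text{ arises from some }(x,y)\in\F_q^{2k}\}|$, i.e., the count of $c$'s for which $Q_c$ splits over $\F_q$ with nonzero residues. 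So $R_k=\{p:g(p)\geq 1\}$ and $\sum_p g(p)=|R^+_k|=q^{2k}(1-O(1/q))/k!$. Cauchy--Schwarz gives $|R_k|\geq(\sum g)^2/\sum g^2$, which combined with the trivial $|R_k|\leq q^{2k-1}$ reduces the claim to the second-moment bound $\sum_p g(p)^2 \leq q^{2k+1}(1+O(1/q))/(k!)^2$.

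\textbf{Main obstacle.} The crux is that second-moment bound. Partitioning by shift parameter $c^*\in\F_q$,
\[
\sum_p g(p)^2 = \sum_{c^*\in\F_q} \bigl|\{\text{good }(P,Q) : c^*\text{-shift is also good}\}\bigr|,
\]
the $c^*=0$ term contributes $q^{2k}/k!$; each $c^*\neq 0$ must contribute $\approx q^{2k}/(k!)^2$. This amounts to showing the events ``$Q_0$ splits over $\F_q$'' and ``$Q_{c^*}$ splits over $\F_q$'' are approximately independent with joint probability $\sim (1/k!)^2$---the square of the classical density of degree-$k$ polynomials over $\F_q$ that split into distinct linear factors. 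I expect this to follow from analyzing the additive $\F_q$-action of Pad\'e shifts on the $(P,Q)$-parameter space together with the count $\binom{q-1}{k}\sim q^k/k!$ of such splitting polynomials, with the main subtlety appearing near degenerate $(P,Q)$ (an $O(1/q)$ fraction) where the pencil is not fully generic.
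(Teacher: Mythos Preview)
Your Part~(i) is correct and is essentially a repackaging of the paper's argument: the paper proves injectivity of $\z$ on the generic set modulo $S_k$ via Vandermonde linear independence, while you prove the same thing via uniqueness of the $[k{-}1/k]$ Pad\'e approximant. These are the same linear-algebra fact dressed differently, and the counting afterward is identical.

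Part~(ii), however, is genuinely incomplete, and the gap is exactly where you flag it. You reduce to the second-moment bound $\sum_p g(p)^2 \le q^{2k+1}(1+O(1/q))/(k!)^2$ and then state only an \emph{expectation} that it follows from approximate independence of the events ``$Q_0$ splits'' and ``$Q_{c^*}$ splits'' under the Pad\'e shift. You give no mechanism for establishing that independence, and it is not obvious: the Pad\'e action on $(P,Q)$-space is not a group action that a priori equidistributes the splitting locus, and controlling correlations between root structures of $Q_0$ and $Q_{c^*}$ for all $c^*\ne 0$ simultaneously is a nontrivial equidistribution statement in its own right. As written, this is a heuristic, not a proof.

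The paper takes a completely different and much shorter route for Part~(ii): it runs the second-moment argument directly on $|Z^{-1}(z)|$ (over \emph{all} $(x,y)\in\F_q^k\times\F_q^k$, not just the generic ones) and evaluates $\sum_z |Z^{-1}(z)|^2$ by a character-sum expansion. Writing $\delta[\z(u,v)=\z(x,y)]=q^{-(d+1)}\sum_\lambda \chr(\lambda\cdot(\z(u,v)-\z(x,y)))$, the sum over $u,v,x,y$ factors, and for each fixed nonzero $\lambda$ the inner sum counts roots of the degree-$d$ polynomial $\sum_j\lambda_j X^j$, giving a contribution of at most $(qd)^{2k}$. Chebyshev then yields $\Pr[Z^{-1}(z)=0]\le d^{2k}/q$ immediately. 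This bypasses the Pad\'e/splitting analysis entirely; if you want to rescue your approach, note that the same character-sum computation would in fact bound your $\sum_p g(p)^2$ as well (it is essentially $\sum_z |Z^{-1}(z)^{\mathrm{good}}|^2/(k!)^2$), so the missing ingredient is available---but you did not supply it.
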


To show the former bound, we explicitly characterize the possible $(x,y) \in \F_q^k \times \F_q^k$ such that $\z(x,y)$ takes a particular value.  We prove the latter bound in a completely different way, using a second moment argument.

\thm{rangesize} shows that the success probability has a sharp transition as a function of $k$, from subconstant for $k<d/2+1/2$ (by known lower bounds \cites{KK11,MP11}), to a ($d$-dependent) constant for $k=d/2+1/2$, to $1-o(1)$ for $k=d/2+1$.  Note that since $k$ must be an integer, the success probability varies differently with $k$ depending on whether $d$ is odd or even.  For fixed even $d$, $k=d/2+1$ queries give success probability $1-o(1)$, whereas $k=d/2$ queries give success probability $o(1)$.  For fixed odd $d$, the success probability is $o(1)$ for $k=d/2-1/2$ and constant for $k=d/2+1/2$.  To achieve higher success probability, we can make $k=d/2+3/2$ queries and treat $f$ as a polynomial of degree $d+1$ with $c_{d+1}=0$, giving success probability $1-o(1)$.

In light of these results, polynomial interpolation is reminiscent of the task of computing the parity of $n$ bits, where the classical query complexity is $n$ (even for bounded error) and the quantum query complexity is $n/2$ \cites{BBCMW01,FGGS98}.  More generally, a similar factor-of-two improvement is possible for the oracle interrogation problem, where the goal is to learn the entire $n$-bit string encoded by a black box \cite{Dam98}.  However, polynomial interpolation is qualitatively different in that the oracle returns values over $\F_q$ rather than $\F_2$.  Note that for the oracle interrogation problem over $\F_q$, one can only achieve speedup by a factor of about $1-1/q$ \cite{BZ13}*{Section 4}, which is negligible for large $q$.

Our algorithm improves results of Boneh and Zhandry giving quantum attacks on certain cryptographic protocols \cite{BZ13}.  For a version of the Shamir secret sharing scheme~\cite{Sha79} where the shares can be quantum superpositions, their $d$-query interpolation algorithm shows that a subset of only $d$ parties can recover the secret.  Our algorithm considerably strengthens this, showing that a subset of $d/2+1/2$ parties can recover the secret with constant probability, and $d/2+1$ can recover it with probability $1-O(1/q)$.  Boneh and Zhandry also formulate a model of quantum message-authentication codes (\MACs), where the goal is to tag messages to authenticate the sender.  Informally, a \MAC is called $d$-time if, given the ability to create $d$ valid message-tag pairs, an attacker cannot forge another valid message-tag pair.  Boneh and Zhandry show that there are $(d+1)$-wise independent functions that are not $d$-time quantum \MACs.  Our result improves this to show that there are $(d+1)$-wise independent functions that are not $(d/2+1/2)$-time quantum \MACs.

Finally, we consider the gate complexity of polynomial interpolation.  We call an algorithm \emph{gate-efficient} if it can be implemented with a number of $2$-qubit gates that is only larger than its query complexity by a factor of $\poly(\log q)$.  We construct a gate-efficient variant of our algorithm that achieves almost the same success probability.\footnote{Note that while our algorithm for $k=d/2+1/2$ has gate complexity polynomial in both $\log q$ and $d$, the algorithm for $k=d/2+1$ has gate complexity $k! \poly(\log q)$.  Improving the dependence on $d$ is a natural open question.}

\begin{theorem}\label{thm:efficiency}
For any fixed positive integer $d$, there is a gate-efficient quantum algorithm for interpolating a polynomial of degree $d$ over $\F_q$ using
\begin{enumerate}[itemsep=0pt,topsep=2pt,label=\textup{(\roman*)},ref=(\roman*)]
\item $k=\frac{d}{2}+\frac{1}{2}$ queries, succeeding with probability $\frac{1}{k!}(1-O(1/q))$, if $d$ is odd; and \label{itm:eff_threshold}
\item $k=\frac{d}{2}+1$ queries, succeeding with probability $1-o(1)$, if $d$ is even. \label{itm:eff_aboveth}
\end{enumerate}
\end{theorem}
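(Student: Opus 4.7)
The plan is to implement the optimal algorithm behind \thm{sucprob} and \thm{rangesize} with $\poly(\log q)$ gate overhead per query (times a modest $d$-dependent factor), incurring only error that fits inside the stated success probabilities. I view the optimal algorithm as four primitives: (1) preparing a superposition $\sum_{(x,y)} \alpha_{x,y} |x\rangle|y\rangle$ over $\F_q^k \times \F_q^k$ whose squared amplitudes project to the uniform distribution on $R_k$ after applying $\z$; (2) performing $k$ phase queries, which multiplies each basis state $|x\rangle|y\rangle$ by the character value $\chi(c \cdot \z(x,y))$ for a fixed nontrivial additive character $\chi$ of $\F_q$; (3) computing $\z(x,y)$ into an ancilla and uncomputing $(x,y)$ classically from the stored value of $\z$ (the ``classical uncomputation'' enabled by the tailored amplitudes); and (4) applying the inverse QFT over $\F_q^{d+1}$ and measuring to read off $c$.

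Primitives (2), (4), and the forward half of (3) are immediately $\poly(\log q)$-gate: phase queries and the final QFT reduce to the QFT over $\F_q$, which admits a well-known $\poly(\log q)$-gate implementation, and computing $\z(x,y)_j = \sum_{i=1}^k y_i x_i^j$ for $j \le d$ is a fixed arithmetic circuit of size $\poly(d, \log q)$ over $\F_q$, which is $\poly(\log q)$ for fixed $d$. The main obstacle is therefore primitive (1) and the inverse half of (3), which together require a concrete choice of support $G \subseteq \F_q^k \times \F_q^k$ (with corresponding amplitudes) such that both the preparation and the $\z$-inversion on $G$ are gate-efficient.

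For item \itm{eff_threshold}, with $d$ odd and $k = d/2+1/2$, I would take $G$ to consist of ``generic'' ordered tuples: pairs $(x,y)$ with $x_1 < x_2 < \cdots < x_k$ under a fixed total order on $\F_q$ and with each $y_i$ nonzero. By the counting used to prove \thm{rangesize}\itm{rangesize_threshold}, $\z$ is injective on $G$ and $\z(G)$ exhausts $R_k$ up to an $O(1/q)$ fraction. The uniform superposition over $G$ can be prepared in $\poly(k, \log q)$ gates using standard amplitude-shaping (via the QFT over $\F_q$) together with a comparator network to enforce the ordering. To invert $\z$ on $G$, I would proceed in the style of Prony's method: build the Hankel matrix of the $\z$-components, solve the associated linear system to obtain the coefficients of the polynomial $\prod_i(X-x_i)$, factor this polynomial over $\F_q$ (deterministic in $\poly(k,\log q)$ gates since $d$ is fixed) to recover the ordered tuple $(x_i)$, and finally solve a $k \times k$ Vandermonde system for $(y_i)$; all steps fit within $\poly(d,\log q)$ gates.

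For item \itm{eff_aboveth}, with $d$ even and $k = d/2+1$, the fiber structure of $\z$ is more delicate: a strictly injective ordered $G$ with $\z(G)=R_k$ is unavailable because generic fibers contain $k!$ permutation-equivalent preimages together with an additional $q$-parameter sliding degree of freedom. I would instead partition the support into the $k!$ cells indexed by orderings of the query indices, preparing a uniform superposition over the disjoint union and running the above Newton's-identities plus Vandermonde inversion inside each cell; the $k!$-fold overhead in both state preparation and classical inversion accounts for the $k! \cdot \poly(\log q)$ gate count asserted in the footnote. Any $o(1)$ loss arising from approximate amplitude preparation or truncation of negligible-measure special fibers is absorbed into the $1-o(1)$ bound of \thm{rangesize}\itm{rangesize_aboveth}. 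The hardest technical step in this plan is verifying that the cell decomposition for item \itm{eff_aboveth} can be made compatible with efficient coherent preparation---everything else reduces to standard quantum arithmetic and classical algebra over $\F_q$.
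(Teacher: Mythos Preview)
Your treatment of item~\itm{eff_threshold} is essentially the paper's proof: restrict to ``good'' pairs $(x,y)$ with distinct $x_i$ and nonzero $y_i$, invert $\z$ via Prony's method (Hankel system $\to$ characteristic polynomial $\to$ root-finding $\to$ Vandermonde solve for $y$), and canonicalize by an ordering on $x$. The paper's state preparation differs cosmetically---it prepares a uniform superposition over $z\in\F_q^{d+1}$, applies $\z^{-1}$, and postselects on success, rather than directly building the ordered superposition---but the substance is the same. One small caveat: you assert that factoring the degree-$k$ polynomial over $\F_q$ is \emph{deterministic} in $\poly(k,\log q)$; the paper invokes a randomized algorithm, and unconditional deterministic factoring in $\poly(\log q)$ is not known even for bounded degree, so you should either use randomized factoring or justify the deterministic claim.

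Your plan for item~\itm{eff_aboveth} has a genuine gap. You correctly observe that generic fibers of $\z$ carry, in addition to the $k!$ permutation symmetry, a one-parameter ``sliding'' degree of freedom of size roughly $q$. But your proposed fix---partitioning the support into $k!$ cells indexed by orderings---only quotients out the permutation symmetry; within a single cell the fiber over a given $z$ still has about $q/k!$ elements, and your Prony/Vandermonde inversion cannot pick one out, because with $k=d/2+1$ the Hankel system needs $z_0,\ldots,z_{2k-1}$ while only $z_0,\ldots,z_d=z_{2k-2}$ are specified. The missing value $z_{d+1}$ is precisely the extra parameter, and the $k!$ overhead in the paper does \emph{not} come from orderings: it comes from the fact that a uniformly random choice of $z_{d+1}\in\F_q$ yields a characteristic polynomial that splits into $k$ distinct $\F_q$-roots with probability only about $1/k!$. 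The paper handles this by replacing the unique representative $T_k^\good$ with the coherent superposition $\ket{Z^{-1}(z)^\good}$ over the entire good fiber, shows this state can stand in for a representative in the algorithm, and prepares it by superposing over $z_{d+1}$, running the $k=d/2+1/2$ inversion, and boosting the $1/k!$ success probability by repetition or amplitude amplification. Without this idea---or some other efficient way to canonicalize or coherently represent the $q$-sized fibers---your outline for~\itm{eff_aboveth} does not go through.
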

The main step in implementing the algorithm is to invert the function $\z$ described in the statement of \thm{sucprob}, i.e., to find some $x,y \in \F_q^k$ so that $\z(x,y)$ takes a given value.  We achieve this by characterizing the solutions in terms of a polynomial equation and a system of linear equations.

In \sec{discussion} we discuss the more general case where $f\in\FF_q[X_1,\dots,X_n]$ is a multivariate polynomial of degree $d$. While our algorithm generalizes straightforwardly, the analysis of its success probability is more complicated. We conjecture that the quantum query complexity of this problem is smaller than the classical query complexity by a factor of $n+1$.

The remainder of the paper is organized as follows.
After introducing some definitions in \sec{prelim}, we describe our $k$-query algorithm in \sec{algdesc}.
We analyze the success probability of this algorithm for $k = d/2 + 1/2$ in \sec{at_threshold}, and for $k=d/2 + 1$ in \sec{above_threshold}.
We also show in \sec{pgm} that essentially the same performance can be achieved using $k$ independent queries to the oracle, each on a uniform superposition of inputs (which might make some cryptographic attacks easier, depending on the model).
We establish optimality of our algorithm in \sec{optimality}.
In \sec{efficiency}, we describe the gate-efficient version of our algorithm.
Finally, we conclude in \sec{discussion} with a brief discussion of some open questions.

\section{Quantum algorithm for polynomial interpolation}
\label{sec:alg}

\subsection{Preliminaries}
\label{sec:prelim}

Let $f(X) = c_d X^d + \cdots + c_1 X + c_0\in\FF_q[X]$ be an unknown polynomial of degree $d$ that is specified by the vector of coefficients $c\in\FF_q^{d+1}$, 
where $q=p^r$ a power of a prime $p$. Access to $f$ is provided by a black box acting as $\ket{x,y}\mapsto \ket{x,y+f(x)}$ for all $x,y\in\FF_q$.

Let $\chr\colon \FF_q\rightarrow \CC$ be the exponential function $\chr(z) = \e^{2\pi\i \Tr(z)/p}$, where the trace function $\Tr\colon \FF_q\rightarrow \FF_p$ is defined by $\Tr(z) = z+z^p + z^{p^2}+\cdots + z^{p^{r-1}}$.
The Fourier transform over $\FF_q$ is the unitary transformation acting as
$ \ket{x} \mapsto \frac{1}{\sqrt{q}}\sum_{y\in\FF_q} \chr(xy)\ket{y}
$
for all $x\in\FF_q$.

We can compute the value of $f$ into the phase by Fourier transforming the second query register.  If we apply the inverse Fourier transform, perform a query, and then apply the Fourier transform, we have the transformation
\begin{align}
	\ket{x,y}
	&\mapsto \frac{1}{\sqrt q} \sum_{z \in \F_q} \chr(-yz) \ket{x,z} \\
	&\mapsto \frac{1}{\sqrt q} \sum_{z \in \F_q} \chr(-yz) \ket{x,z+f(x)} \\
	&\mapsto \frac{1}{q} \sum_{z,w \in \F_q} \chr(-yz+(z+f(x))w) \ket{x,w} \\
	&= \chr(yf(x)) \ket{x,y}
\end{align}
for any $x,y \in \F_q$, where we used the fact that $\sum_{z \in \F_q} \chr(zv) = q \delta_{z,v}$.  We call the transformation $\ket{x,y} \mapsto \chr(yf(x))\ket{x,y}$ a \emph{phase query}. Since a phase query can be implemented with a single standard query and vice versa, the query complexity of a problem does not depend on which type of query we use.

For vectors $x,y\in\FF_q^k$, we denote the inner product over $\F_q$ by $x\cdot y := \sum_{i=1}^k x_i y_i$. The $k$-fold Fourier transform (i.e., the Fourier transform acting independently on each register) acts as
$  \ket{x}
  \mapsto \frac{1}{\sqrt{q^k}}\sum_{y\in\FF^k_q} \chr(x\cdot y)\ket{y}
$
for any $x \in \F_q^k$.

\subsection{The algorithm}
\label{sec:algdesc}

We now describe our algorithm for polynomial interpolation. An ideal algorithm would produce the Fourier transform of the coefficient vector $c \in \F_q^{d+1}$, that is, the state 
\begin{equation}
\ket{\hat{c}} = \frac{1}{\sqrt{q^{d+1}}}\sum_{z\in\FF_q^{d+1}}\chr(c\cdot z)\ket{z}. 
\end{equation}
Instead we use $k$ quantum queries to create the approximate state 
\begin{equation}
\ket{\hat{c}_{R_k}} := \frac{1}{\sqrt{|R_k|}}\sum_{z\in R_k}\chr(c\cdot z)\ket{z}
\end{equation}
for some set $R_k \subseteq \F_q^{d+1}$.
A measurement of this state in the Fourier basis gives $c$ with probability $|\braket{\hat{c}_{R_k}}{\hat{c}}|^2 = |R_k|/q^{d+1}$.

Our algorithm performs $k$ phase queries in parallel, each acting on a separate register.  On input $\ket{x,y}$ for $x,y \in \F_q^k$, these $k$ queries introduce the phase $\chr(\sum_{i=1}^k y_i f(x_i))$.  To define the set $R_k$, recall the function $\z\colon \FF_q^k\times\FF_q^k\rightarrow \FF_q^{d+1}$
defined by 
\begin{equation}
  \z(x,y)_j := \sum_{i=1}^k y_ix_i^j
  \text{~for~} j\in\{0,1,\dots,d\}.
\label{eq:z}
\end{equation}
Then we have
$\sum_{i=1}^k y_i f(x_i)
 = \sum_{i=1}^k \sum_{j=0}^d y_i c_j x_i^j
= c\cdot \z(x,y)$
for all $x,y\in\FF_q^k$.
The range $R_k := \z(\FF_q^k\times \FF_q^k)$ of the function $\z$ is the set 
\begin{equation}
  R_k = \{\z(x,y) : (x,y)\in \FF_q^k\times \FF_q^k\} \subseteq \FF_q^{d+1}.
\end{equation}
For each $z\in R_k$ we choose a unique $(x,y)\in\FF_q^k\times \FF_q^k$ such that $\z(x,y)=z$. 
Let $T_k \subseteq \FF^k_q\times \FF^k_q$ be the set of these representatives. Clearly, $\z\colon T_k \to R_k$ is a bijection.  

To create the state $\ket{\hat c_{R_k}}$, we prepare a uniform superposition over $T_k$, perform $k$ phase queries, and compute $\z$ in place (i.e., perform the unitary transformation $\ket{x,y} \mapsto \ket{\z(x,y)}$), giving
\begin{align}
\frac{1}{\sqrt{|T_k|}}\sum_{(x,y)\in T_k}\ket{x,y}
&\mapsto  \frac{1}{\sqrt{|T_k|}}\sum_{(x,y)\in T_k}\chr(c\cdot \z(x,y))\ket{x,y} \\
&\mapsto  \frac{1}{\sqrt{|R_k|}}\sum_{z\in R_k}\chr(c\cdot z)\ket{z}.
\end{align}

The above procedure is a $k$-query algorithm for polynomial interpolation that succeeds with probability $|R_k|/q^{d+1}$, establishing the lower bound on the success probability stated in \thm{sucprob}.  To analyze the algorithm, it remains to lower bound $|R_k|$ as a function of $k$.

\subsection{Performance using $d/2 + 1/2$ queries}
\label{sec:at_threshold}

We now consider the performance of the above algorithm using $k=d/2 + 1/2$ queries.  Let
\begin{align}
 \z^{-1}(z) = \{(x,y) \in \F_q^k \times \F_q^k : \z(x,y)=z\}
\end{align}
be the set of those $(x,y) \in \F_q^k \times \F_q^k$ corresponding to a particular $z \in \F_q^{d+1}$.  Clearly $|R_k|$ is the number of values of $z$ such that $Z^{-1}(z)$ is nonempty.  To analyze this, we focus on ``good'' values of $(x,y)$.  Define
$  X_k^\good := \{x \in \FF_q^k : x_i \ne x_j \; \forall \, i \ne j\}  \text{~and~}   
  Y_k^\good := (\FF_q^\times)^k$
and let $Z^{-1}(z)^\good := Z^{-1}(z) \cap (X_k^\good \times Y_k^\good)$.  We claim the following:

\begin{lemma}\label{lem:thresholdsize}
If $k = d/2 + 1/2$, then for all $z \in \FF_q^{d+1}$, either $|Z^{-1}(z)^\good|=0$ or $|Z^{-1}(z)^\good|=k!$.
\end{lemma}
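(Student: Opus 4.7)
The plan is a Prony-type uniqueness argument. The hypothesis $k = (d+1)/2$ gives exactly $d+1 = 2k$ moments $z_0,\ldots,z_{2k-1}$ available, which will match the $2k$ unknowns in $(x,y)$.

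The lower bound $|Z^{-1}(z)^{\good}| \ge k!$ is immediate from symmetry. If $(x,y)\in Z^{-1}(z)^{\good}$, then for every $\sigma\in S_k$ the pair obtained by applying $\sigma$ simultaneously to the indices of $x$ and $y$ also lies in $Z^{-1}(z)^{\good}$, since $Z(x,y)$ depends only on the multiset $\{(x_i,y_i)\}_i$. Distinctness of the $x_i$ makes these $k!$ permutations distinct points of $\F_q^k\times\F_q^k$.

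For the matching upper bound $|Z^{-1}(z)^{\good}|\le k!$, I would argue that the multiset $\{(x_i,y_i)\}_{i=1}^k$ is determined by $z$. The central object is the $k\times k$ Hankel matrix $H := (z_{i+j})_{0\le i,j\le k-1}$, well-defined because $i+j \le 2k-2 \le d$. Substituting $z_{i+j} = \sum_\ell y_\ell x_\ell^{i+j}$ gives the factorization $H = V^\top D V$, with $V_{\ell,i} = x_\ell^i$ Vandermonde and $D = \mathrm{diag}(y_1,\ldots,y_k)$; goodness makes $V$ and $D$, and thus $H$, invertible. Writing $P(X) := \prod_{i=1}^k (X-x_i) = X^k + a_{k-1}X^{k-1} + \cdots + a_0$ and using $P(x_\ell)=0$ to deduce $x_\ell^{k+m} = -\sum_{j=0}^{k-1} a_j x_\ell^{j+m}$, then multiplying by $y_\ell$ and summing over $\ell$, yields
\[
\sum_{j=0}^{k-1} z_{j+m}\,a_j = -z_{k+m}, \qquad m = 0,1,\ldots,k-1,
\]
a linear system with coefficient matrix $H$. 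Invertibility forces $a_0,\ldots,a_{k-1}$, and hence $P$, to depend only on $z$. Therefore any good preimage of $z$ has its $x$-coordinates equal to the $k$ distinct roots of $P$, and once these are fixed, the $y$-coordinates are uniquely determined from $z_0,\ldots,z_{k-1}$ by the invertible Vandermonde system $\sum_\ell y_\ell x_\ell^j = z_j$. So any second good preimage is an index-permutation of $(x,y)$.

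The main technical step, and essentially the only place where goodness is used, is the Hankel factorization together with its invertibility; the remaining manipulations are routine. The parity hypothesis $d = 2k-1$ is essential, as this is exactly the regime in which the number of moments matches the number of parameters and the weighted point configuration is pinned down uniquely (up to relabelling) by its moments.
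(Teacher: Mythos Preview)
Your argument is correct, but it is not the route the paper takes for this lemma. The paper argues directly with Vandermonde vectors: if $(x,y)$ and $(u,v)$ are two good preimages of the same $z$, then after permuting indices so that $x_i=u_i$ for $i\le m$ and $x_i\ne u_i$ for $i>m$, the relation $\sum_i y_i\vec{x}_i=\sum_i v_i\vec{u}_i$ becomes a vanishing linear combination of at most $2k-m\le 2k=d+1$ Vandermonde vectors $\vec{x}_i=(1,x_i,\ldots,x_i^d)$ with distinct nodes, hence all coefficients vanish, forcing $y_i=0$ for $i>m$ and contradicting $y\in Y_k^{\good}$ unless $m=k$.

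Your Prony/Hankel route is in fact exactly what the paper uses later, in \sec{efficiency}, to \emph{compute} the inverse $Z^{-1}$ for the gate-efficient algorithm; there the paper even remarks in a footnote that this computation reproves \lem{thresholdsize}. So your proof is not new to the paper, just relocated. The paper's Vandermonde argument is shorter and uses only that $2k\le d+1$ distinct nodes give independent vectors; your argument is more constructive, since it identifies the data $\{x_i\}$ as the root set of a polynomial whose coefficients are $H^{-1}(z_k,\ldots,z_{2k-1})^\top$, and this is precisely what makes it useful algorithmically.
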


\begin{proof}
We can write the condition $\z(x,y)=z$ in the form
$\sum_i y_i \vec x_i = z$, where $\vec x_i := (1,x_i,x_i^2,\ldots,x_i^d)$.
We claim that for a given $z \in \F_q^{d+1}$, the values $(x,y) \in X^\good \times Y^\good$ that satisfy this equation are unique up to a permutation of the indices.  To see this, suppose that $\z(x,y)=\z(u,v)$ for some good values $(x,y) \ne (u,v)$.  By permuting the indices, we can ensure that $x_i = u_i$ for $i \in \{1,\ldots,m\}$ and $x_i \ne u_i$ for $i \in \{m+1,\ldots,k\}$, where $m$ is the number of positions at which $x$ and $u$ agree.  Then we have
\begin{align}
	\sum_{i=1}^m (y_i - v_i) \vec x_i 
	+ \sum_{i=m+1}^k y_i \vec x_i 
	+ \sum_{i=m+1}^k v_i \vec u_i = 0.
\label{eq:lincomb}
\end{align}
It is well known that the Vandermonde matrix
\begin{align}
\begin{pmatrix}
1 & 1 & \cdots & 1 \\
x_1 & x_2 & \cdots & x_{d+1} \\
x^2_1 & x^2_2 & \cdots & x^2_{d+1} \\
\vdots & \vdots & & \vdots \\
x^d_1 & x^d_2 & \cdots & x^d_{d+1}
\end{pmatrix}
\end{align}
is invertible provided the values $x_1,x_2,\ldots,x_{d+1}$ are distinct.
Because the values $x_i$ for $i \in \{1,\ldots,k\}$ and $u_i$ for $i \in \{m+1,\ldots,k\}$ are all distinct, and because the number of terms in \eq{lincomb} is at most $2k \le d+1$, the vectors $\vec x_i$ for $i \in \{1,\ldots,k\}$ and $\vec u_i$ for $i \in \{m+1,\ldots,k\}$ are linearly independent.  Thus we have $y_i = v_i$ for all $i \in \{1,\ldots,m\}$ and $y_i = v_i = 0$ for all $i \in \{m+1,\ldots,k\}$.  Since $y \in Y^\good$, we cannot have $y_i=0$ for any $i$, so we must have $m=k$.  Therefore $x=u$ and $y=v$.  It follows that the only way to obtain a distinct $(x,y)$ is to permute the indices, and therefore we either have $|Z^{-1}(z)^\good|=0$ (if there is no $(x,y) \in X^\good \times Y^\good$ such that $\z(x,y)=z$) or $|Z^{-1}(z)^\good| = k!$.
\end{proof}

Using \lem{thresholdsize}, we can show that $k=d/2 + 1/2$ queries suffice to perform polynomial interpolation with probability that is independent of $q$, but that decreases with $d$.

\begin{proof}[Proof of \thm{rangesize}\itm{rangesize_threshold}: $k=d/2+1/2$]
We have $|X_k^\good| = q!/(q-k)!$ and $|Y_k^\good| = (q-1)^k$, so
\begin{align}
	\sum_{z \in \F_q^{d+1}} |Z^{-1}(z)^\good| 
	&= |X_k^\good| \cdot |Y_k^\good|
	= \frac{q!}{(q-k)!} (q-1)^k = q^{2k} (1 - O(1/q)).
	\label{eq:numgoodz}
\end{align}
Thus, invoking \lem{thresholdsize}, the number of values of $z$ for which $|Z^{-1}(z)^\good| = k!$ is at least $\frac{q^{2k}}{k!}(1-O(1/q))$.  Since $k=d/2 + 1/2$, it follows that $|R_k|/q^{d+1}$ is at least $\frac{1}{k!}(1-O(1/q))$, as claimed.
\end{proof}

\subsection{Performance using $d/2 + 1$ queries}
\label{sec:above_threshold}

Next we show that with more than $d/2 + 1/2$ queries, the success probability approaches $1$ for large $q$.

\begin{proof}[Proof of \thm{rangesize}\itm{rangesize_aboveth}: $k=d/2+1$]
Under the uniform distribution on $z \in \F_q^{d+1}$, we have
\begin{align}
 {|R_k|}/{q^{d+1}} = 1 - \Pr[|Z^{-1}(z)|=0]. 
\end{align}
 We use a second moment argument to upper bound the number of $z \in \F_q^{d+1}$ for which $|Z^{-1}(z)|=0$.  The mean of $|Z^{-1}(z)|$ is
 \begin{align}
  \mu := \frac{1}{q^{d+1}} \sum_{z \in \F_q^{d+1}} |Z^{-1}(z)| = q^{2k-(d+1)}.
\end{align}
Let $\delta[\mathcal P]$ be $1$ if $\mathcal P$ is true and $0$ if  
$\mathcal P$ is false.
For the second moment, we compute
\begin{align}
	\sum_{z \in \F_q^{d+1}} |Z^{-1}(z)|^2
	&= \sum_{u,v,x,y \in \F_q^k} \delta[\z(u,v)=\z(x,y)] \\
	&= \sum_{u,v,x,y \in \F_q^k} \frac{1}{q^{d+1}} \sum_{\lambda \in \F_q^{d+1}} \chr(\lambda \cdot (\z(u,v)-\z(x,y))) \\
	&= \frac{q^{4k}}{q^{d+1}} + \frac{1}{q^{d+1}} \sum_{\lambda \in \F_q^{d+1} \setminus (0,\ldots,0)}
	\Biggl( \sum_{x,y \in \F_q} \chr\biggl(y \sum_{j=0}^d \lambda_j  x^j\biggr) \Biggr)^{2k} \\
	&= q^{4k-(d+1)} + \frac{1}{q^{d+1}} \sum_{\lambda \in \F_q^{d+1} \setminus (0,\ldots,0)}
	\Biggl(q \sum_{x \in \F_q} \delta\Biggl[\sum_{j=0}^d \lambda_j x^j = 0\Biggr]\Biggr)^{2k} \\
	&\le q^{4k-(d+1)} + (qd)^{2k}.
\end{align}
Thus for the variance, we have
\begin{align}
\label{eq:variance}
	\sigma^2:= \frac{1}{q^{d+1}} \sum_{z \in \F_q^{d+1}} |Z^{-1}(z)|^2 - \mu^2 
	\le \frac{(qd)^{2k}}{q^{d+1}}.
\end{align}
(note that $\sigma^2 \ge 0$ by the Cauchy inequality). 
Applying the Chebyshev inequality, we find
\begin{align}
	\Pr[Z^{-1}(z)=0]
	\le \frac{\sigma^2}{\mu^2} 
	\le \frac{(qd)^{2k}/q^{d+1}}{q^{4k-2(d+1)}}
	= d^{2k} q^{d+1-2k}.
\end{align}
Therefore $|R_k|/q^{d+1} = 1-\Pr[Z^{-1}(z)=0] \ge 1 - d^{2k} q^{d+1-2k}$.  With $k=d/2 + 1$, we have 
\begin{align}
|R_k|/q^{d+1} &\ge 1 - d^{2k} / q = 1 - O(1/q)
\end{align}
as claimed.
\end{proof}

Note that one can improve the dependence on $d$ in \eq{variance} using results on the distribution of zeros in random polynomials \cite{KK90}.

\subsection{An alternative algorithm}
\label{sec:pgm}

The algorithm described above queries the oracle nonadaptively, that is, all $k$ queries can be performed in parallel.  However, the input state to these queries is correlated across all $k$ copies.  In this section, we describe an alternative algorithm that queries the black box on a state that is independent and identical for each of the $k$ queries, namely, a uniform superposition over all inputs.  This algorithm is suboptimal, but its performance is not significantly worse than that of the optimal algorithm described in \sec{algdesc}.

Analogous to the so-called standard method for the hidden subgroup problem, querying $f$ on a uniform superposition gives the state
$\frac{1}{\sqrt q} \sum_{x \in \F_q^k} \ket{x,f(x)}$.
If we use $k$ queries to prepare $k$ copies of this state and then perform the Fourier transform on the second register (or equivalently, perform $k$ independent phase queries), we obtain the state
\begin{align}
	\frac{1}{q^k} \sum_{x,y \in \F_q^k} \chr(c \cdot \z(x,y)) \ket{x,y}
	= \frac{1}{q^k} \sum_{z \in \F_q^{d+1}} \chr(c \cdot z) \sqrt{|Z^{-1}(z)|} \, \ket{Z^{-1}(z)}
\end{align}
where $\ket{Z^{-1}(z)} := \sum_{(x,y) \in Z^{-1}(z)} \ket{x,y}/|Z^{-1}(z)|^{1/2}$.
Motivated by the \PGM approach to the hidden subgroup problem \cite{BCD05}, suppose we perform the transformation $\ket{Z^{-1}(z)} \mapsto \ket{z}$, giving the state
\begin{align}
	\ket{\phi^c_k} := \frac{1}{q^k} \sum_{z \in \F_q^{d+1}} \chr(c \cdot z) \sqrt{|Z^{-1}(z)|} \, \ket{z}.
\end{align}
Measuring this state in the Fourier basis gives the outcome $c$ with probability 
\begin{align}
  |\braket{\phi^c_k}{\hat c}|^2
  = \frac{1}{q^{2k+d+1}} \bigg( \sum_{z \in \F_q^{d+1}} \sqrt{|Z^{-1}(z)|} \bigg)^2.
\end{align}

If $k=d/2+1/2$, we claim that this algorithm succeeds with constant probability.  From the proof of \thm{rangesize} for $k=d/2+1/2$, we have that $|Z^{-1}(z)| \ge k!$ for at least $\frac{q^{2k}}{k!}(1-O(1/q))$ values of $z$.  Therefore the success probability is at least $\frac{1}{k!} (1 - O(1/q))$.

If $k=d/2+1$, then this algorithm succeeds with probability that approaches $1$ for large $q$.  To see this, recall from the proof of \thm{rangesize} for $k=d/2+1$ that, under a uniform distribution over $z \in \F_q^{d+1}$, the quantity $Z^{-1}(z)$ has mean $\mu = q$ and standard deviation $\sigma = \sqrt{q} d^k$.  Thus, by the Chebyshev inequality, we have
\begin{align}
	\Pr\bigl[|Z^{-1}(z)| \le q - \alpha \sqrt{q} d^k\bigr] \le \frac{1}{\alpha^2}.
\label{eq:cheby_pgm}
\end{align}
It follows that
\begin{align}
	|\braket{\phi^c_k}{\hat c}|^2 \ge \biggl(1 - \frac{\alpha d^k}{\sqrt{q}}\biggr)\biggl(1-\frac{1}{\alpha^2}\biggr)^2.
\end{align}
Choosing $\alpha = \Theta(q^{1/6})$, this gives a success probability of $|\braket{\phi^c_k}{\hat c}|^2 = 1 - O(q^{-1/3})$, which approaches $1$ for large $q$.

\section{Optimality}
\label{sec:optimality}

In this section, we show that the query complexity of our algorithm is precisely optimal: no $k$-query algorithm can succeed with a probability larger than $|R_k|/q^{d+1}$.  We begin with a basic result showing that $m$ states spanning an $n$-dimensional subspace can be distinguished with probability at most $n/m$.

\begin{lemma}\label{lem:prob_dim}
Suppose we are given a state $\ket{\psi_c}$ with $c \in C$ chosen uniformly at random.  Then the probability of correctly determining $c$ with some orthogonal measurement is at most $\dim\spn\{ \ket{\psi_c} : c \in C \}/|C|$.
\end{lemma}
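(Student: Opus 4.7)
The plan is to exploit the fact that each state $\ket{\psi_c}$ lies in the $n$-dimensional subspace $V := \spn\{\ket{\psi_{c'}} : c' \in C\}$. Let $P$ denote the orthogonal projector onto $V$; then as operators on the ambient Hilbert space, we have $\ket{\psi_c}\bra{\psi_c} \le P$ for every $c \in C$, since $\ket{\psi_c}$ is a unit vector in the range of $P$.

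Next, I would parametrize an arbitrary orthogonal measurement that tries to identify $c$ by a family of projectors $\{\Pi_c\}_{c \in C}$ with $\Pi_c \Pi_{c'} = \delta_{c,c'}\Pi_c$ and $\sum_{c \in C} \Pi_c \le I$ (the inequality accommodates measurements that have additional outcomes labelled ``don't know''). Under a uniform prior on $c$, the success probability is
\[
	p = \frac{1}{|C|} \sum_{c \in C} \bra{\psi_c} \Pi_c \ket{\psi_c} = \frac{1}{|C|} \sum_{c \in C} \tr\bigl(\Pi_c \ket{\psi_c}\bra{\psi_c}\bigr).
\]

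The key step is then to use the operator inequality $\ket{\psi_c}\bra{\psi_c} \le P$ together with the positivity of $\Pi_c$ to bound each term by $\tr(\Pi_c P)$. Summing over $c$ and pulling the sum inside the trace gives $\sum_c \tr(\Pi_c P) = \tr(P \sum_c \Pi_c) \le \tr(P) = \dim V$, where the inequality uses $\sum_c \Pi_c \le I$ and $P \ge 0$. Dividing by $|C|$ yields the claimed bound $p \le \dim V/|C|$.

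There is no real obstacle: the argument is a short trace/positivity calculation. The only subtlety worth being careful about is handling measurements with more outcomes than $|C|$, which is why I write $\sum_c \Pi_c \le I$ rather than equality; this ensures the bound is stated for \emph{any} orthogonal measurement, not just those whose outcomes are in bijection with $C$.
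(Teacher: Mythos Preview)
Your argument is correct and essentially identical to the paper's: both bound $\bra{\psi_c}\Pi_c\ket{\psi_c}=\tr(\Pi_c\ket{\psi_c}\bra{\psi_c})$ by $\tr(\Pi_c P)$ via the operator inequality $\ket{\psi_c}\bra{\psi_c}\le P$, then sum over $c$ and use $\sum_c \Pi_c \le I$ (the paper takes equality) to obtain $\tr(P)=\dim V$. The only cosmetic difference is that you explicitly allow $\sum_c \Pi_c \le I$ to cover extra ``don't know'' outcomes, which is a harmless generalization.
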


\begin{proof}
Consider a measurement with orthogonal projectors $E_c$, and let $\Pi$ denote the projection onto $\spn\{\ket{\psi_c} : c \in C\}$.  Then we have
\begin{align}
	\Pr[\text{success}]
	&= \frac{1}{|C|} \sum_{c \in C} \bra{\psi_c} E_c \ket{\psi_c}
	\le \frac{1}{|C|} \sum_{c \in C} \tr(E_c \Pi)
	= \frac{\tr(\Pi)}{|C|}
	= \frac{\dim\spn\{ \ket{\psi_c} : c \in C \}}{|C|}
\end{align}
as claimed.
\end{proof}

We apply this lemma where $\ket{\psi_c}$ is the final state of a given quantum query algorithm when the black box contains $c \in \F_q^{d+1}$.  There is no loss of generality in considering an orthogonal measurement at the end of the algorithm since we allow the use of an arbitrary-sized ancilla.

\begin{lemma}\label{lem:dim_range}
Let $\ket{\psi_c}$ be the state of any quantum polynomial interpolation algorithm after $k$ queries, where the black box contains $c \in \F_q^{d+1}$.  Then $\dim\spn\{\ket{\psi_c} : c \in \F_q^{d+1}\} \le |R_k|$.
\end{lemma}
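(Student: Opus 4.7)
The plan is to unfold the action of an arbitrary $k$-query algorithm in the standard phase-query form and display the final state $\ket{\psi_c}$ as a linear combination, indexed by $z\in R_k$, of vectors that do \emph{not} depend on $c$. Since the coefficients of this expansion are the characters $\chr(c\cdot z)$, the claim then follows immediately by dimension counting.

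Without loss of generality, I would write the algorithm as $c$-independent unitaries $U_0, U_1, \ldots, U_k$ (acting on the two query registers together with an arbitrarily large ancilla) interleaved with $k$ phase queries $Q_c$. From \sec{prelim}, $Q_c\ket{x,y}\ket{a} = \chr(y f(x))\ket{x,y}\ket{a} = \chr(c\cdot y\vec{x})\ket{x,y}\ket{a}$ with $\vec{x} := (1,x,\ldots,x^d)$. I would then insert a resolution of the identity on the two query registers immediately before each query. Writing $(x_i, y_i)$ for the value the query registers take on at the $i$th query, each basis path through the circuit picks up a $c$-independent amplitude (a product of matrix elements of the $U_i$'s) together with the total phase $\prod_{i=1}^{k}\chr(c\cdot y_i \vec{x}_i) = \chr\bigl(c\cdot \z(x,y)\bigr)$, where $x=(x_1,\ldots,x_k)$, $y=(y_1,\ldots,y_k)$, and $\z$ is the function from \thm{sucprob}.

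Collecting paths by their sequence $(x,y)$ of query values, the final state takes the form
\begin{align}
\ket{\psi_c} = \sum_{x,y \in \F_q^k} \chr\bigl(c\cdot \z(x,y)\bigr)\ket{\phi_{x,y}} = \sum_{z \in R_k} \chr(c\cdot z)\ket{\phi_z},
\end{align}
where the vectors $\ket{\phi_{x,y}}$ are $c$-independent and $\ket{\phi_z} := \sum_{(x,y)\in \z^{-1}(z)}\ket{\phi_{x,y}}$. Thus every $\ket{\psi_c}$ lies in the fixed subspace $\spn\{\ket{\phi_z} : z \in R_k\}$, which has dimension at most $|R_k|$, giving the lemma.

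The only step that requires any care---and is hardly an obstacle---is verifying the phase-accumulation identity above, since the intervening unitaries $U_i$ may permute and entangle the query registers with the ancilla. This is handled by a one-line induction on the number of queries: each phase query multiplies the current path amplitude by $\chr(c\cdot y_i\vec{x}_i)$ where $(x_i,y_i)$ is the current value of the query registers in that path, and these exponents sum to $c\cdot \z(x,y)$ by the very definition of $\z$. Combining the lemma with \lem{prob_dim} applied to uniformly random $c\in \F_q^{d+1}$ then matches the $|R_k|/q^{d+1}$ lower bound from \sec{algdesc} and establishes the optimality half of \thm{sucprob}.
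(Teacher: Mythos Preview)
Your proposal is correct and follows essentially the same approach as the paper: expand a generic $k$-query algorithm $U_k Q_c \cdots Q_c U_0$ by inserting resolutions of the identity on the query registers, observe that the only $c$-dependence enters through the accumulated phase $\prod_i \chr(y_i f(x_i)) = \chr(c\cdot \z(x,y))$, and regroup the resulting $c$-independent vectors $\ket{\phi_{x,y}}$ by the value $z=\z(x,y)\in R_k$. The paper writes out the product of matrix elements of the $U_j$ explicitly where you describe ``paths'' and a one-line induction, but the argument is the same.
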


\begin{proof}
We claim that
\begin{align}
	\ket{\psi_c} = \sum_{x,y \in \F_q^k} \chr(\z(x,y) \cdot c) \ket{\phi_{x,y}}
\end{align}
for some set of (unnormalized) states $\{\ket{\phi_{x,y}} : x,y \in \F_q^k\}$ that do not depend on $c$.  Then the result follows, since
\begin{align}
	\ket{\psi_c} 
	= \sum_{z \in \F_q^{d+1}} \chr(z \cdot c) \sum_{x,y \in Z^{-1}(z)} \ket{\phi_{x,y}}
	\in \spn\biggl\{\sum_{x,y \in Z^{-1}(z)} \ket{\phi_{x,y}} : z \in \F_q^{d+1}\biggr\},
\end{align}
which has dimension at most $|R_k| = |\{Z(x,y):(x,y)\in\F_q^k\times\F_q^k\}|$.

To see the claim, consider a general $k$-query algorithm $U_k Q_c U_{k-1} \ldots Q_c U_1 Q_c U_0$ acting on states of the form $\ket{x,y,w}$ for an arbitrary-sized workspace register $\ket{w}$, starting in the state $\ket{x_0,y_0,w_0}=\ket{0,0,0}$.  Here $Q_c\colon \ket{x,y,w} \mapsto \chr(y f(x)) \ket{x,y,w}$ is a phase query.  The final state $\ket{\psi_c}$ equals
\begin{align}
	\sum_{\substack{x,y \in \F^{k}_q \\
        x_{k+1}, y_{k+1}\in\F_q\\
	w\in I^{k+1}}}
	\chr\biggl(\sum_{j=1}^k y_j f(x_j)\biggr)
	\biggl( \prod_{j=0}^k \bra{x_{j+1},y_{j+1},w_{j+1}} U_j \ket{x_j,y_j,w_j} \, \biggr)
	\ket{x_{k+1},y_{k+1},w_{k+1}},
\end{align}
with $x_0=y_0=w_0=0$, $x=(x_1,\ldots,x_k)$, $y=(y_1,\ldots,y_k)$, $w=(w_1,\dots,w_{k+1})$, and $I$ some appropriate index set.
This expression has the claimed form when we define
\begin{align}
	\ket{\phi_{x,y}} &= \sum_{\substack{x_{k+1},y_{k+1} \in \F_q \\ w\in I^{k+1}}}
	\biggl( \prod_{j=0}^k \bra{x_{j+1},y_{j+1},w_{j+1}} U_j \ket{x_j,y_j,w_j} \biggr)
	\ket{x_{k+1},y_{k+1},w_{k+1}}. \qedhere
\end{align}
\end{proof}

We can now prove our upper bound on the success probability of quantum algorithms for polynomial interpolation.

\begin{proof}[Proof of \thm{sucprob} (upper bound on success probability)]
By combining \lem{prob_dim} with \lem{dim_range}, we see that if the coefficients $c \in \F_q^{d+1}$ are chosen uniformly at random, no algorithm can succeed with probability greater than $|R_k|/q^{d+1}$.  Since the minimum cannot be larger than the average, this implies a lower bound on the success probability in the worst case of $|R_k|/q^{d+1}$.
\end{proof}

This result also shows that the exact quantum query complexity of polynomial interpolation is maximal.

\begin{corollary}\label{cor:exact}
The exact quantum query complexity of interpolating a degree-$d$ polynomial is $d+1$.
\end{corollary}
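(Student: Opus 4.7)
The plan is to combine \thm{sucprob} with an explicit obstruction to surjectivity of $\z$. The upper bound is immediate: Lagrange interpolation recovers $f$ from any $d+1$ distinct classical evaluations, and a classical query algorithm embeds as a zero-error quantum algorithm with the same number of queries, so $d+1$ queries always suffice.

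For the matching lower bound, \thm{sucprob} tells us that an exact (probability-$1$) $k$-query algorithm exists only if $|R_k|=q^{d+1}$, i.e., if $\z\colon\F_q^k\times\F_q^k\to\F_q^{d+1}$ is surjective. My plan is to exhibit, for every $k\le d$, a concrete target $z\in\F_q^{d+1}\setminus R_k$. The natural witness is $z=(0,\dots,0,1)$, the coefficient vector of $f(X)=X^d$. Assuming toward contradiction that $\sum_{i=1}^k y_i(1,x_i,\dots,x_i^d)=z$, I would first put $(x,y)$ into a canonical form by merging repeated $x_i$'s (adding the corresponding $y_i$'s) and discarding terms of weight zero. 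This yields $k'\le k\le d$ genuinely distinct points $x'_1,\dots,x'_{k'}$ carrying nonzero weights $y'_1,\dots,y'_{k'}$. If $k'=0$ then $z=0$, contradicting $z_d=1$. Otherwise the first $k'$ coordinates of $z$ impose
\begin{equation}
\sum_{i=1}^{k'} y'_i (x'_i)^j = 0 \qquad (j=0,\dots,k'-1),
\end{equation}
a square $k'\times k'$ Vandermonde system on distinct points, which is invertible and therefore forces $y'=0$, contradicting that the weights are nonzero. Hence $z\notin R_k$, so $|R_k|<q^{d+1}$ and no $k\le d$ quantum queries can yield exact interpolation.

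The whole argument is short once the witness $(0,\dots,0,1)$ is identified, and uses only elementary linear algebra. The one mildly delicate step—and the main obstacle—is the reduction to canonical form: it is what guarantees $k'\le d$, so that the first $k'$ vanishing coordinates of $z$ form a square (hence invertible) Vandermonde block rather than a degenerate rectangular system that would not force $y'=0$ when $x_i$'s coincide or some $y_i$'s vanish. No technique beyond what the paper has already developed is needed.
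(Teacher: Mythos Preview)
Your proof is correct and follows essentially the same route as the paper's: both exhibit the witness $z=(0,\dots,0,z_d)$ with $z_d\ne 0$, merge repeated $x_i$'s by summing their weights to reduce to distinct nodes, and then invoke the invertibility of the resulting square Vandermonde block to force $y=0$, contradicting $z_d\ne 0$. Your version is slightly more explicit in tracking the reduced count $k'$ and handling the degenerate case $k'=0$, but the argument is the same.
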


\begin{proof}
This follows from \thm{sucprob} and the fact that if $k<d+1$, we have $|R_k|<q^{d+1}$.  To see this, observe that if $k<d+1$, then vectors of the form $(0,\ldots,0,z_d)$ for $z_d \ne 0$ are not in the range of $\z$.  We can assume there is an $(x,y) \in Z^{-1}(z)$ with $x_1,\ldots,x_k$ all distinct, since if $x_i = x_j$ for some $i \ne j$, then we could delete index $j$ and replace $y_i$ by $y_i + y_j$.  Then in equation \eq{yfromxandz}, the Vandermonde matrix on the left-hand side is invertible, so $y_1 = \cdots = y_k = 0$.  However, this implies that $\sum_i y_i x_i^d = 0 \ne z_d$.
\end{proof}

\section{Gate complexity}
\label{sec:efficiency}

In \sec{alg}, we analyzed the query complexity of our polynomial interpolation algorithm.  Here we describe a $(d/2 + 1/2)$-query algorithm whose gate complexity is $\poly(\log q)$, and whose success probability is close to that of the best algorithm using this number of queries (in particular, for fixed $d$ it still succeeds with constant probability).
We also give an algorithm for the case $k=d/2 + 1$ whose gate complexity is larger by a factor of $\poly(\log q)$, but with an additional factor of $k!$.

\subsection{Algorithm for $k = d/2 + 1/2$ queries}
\label{sec:efficiency_at_threshold}

To simplify the computation of unique representatives of values $z \in R_k$, we restrict attention to the ``good'' case considered in \sec{at_threshold}.
Let
\begin{align}
	R_k^\good := \{\z(x,y) : x \in X_k^\good,\, y \in Y_k^\good\}.
\end{align}
For any $z \in R_k^\good$, we show how to efficiently compute representative values $x \in X_k^\good$ and $y \in Y_k^\good$ with $\z(x,y)=z$, defining a set of representatives $T_k^\good$.
Then we consider an algorithm as described in \sec{algdesc}, but with $R_k$ replaced by $R_k^\good$ and $T_k$ replaced by $T_k^\good$.  Clearly the success probability of this algorithm is $|R_k^\good|/q^{d+1}$.
Our lower bound on $|R_k|$ in \sec{at_threshold} was actually a bound on $|R_k^\good|$, so this algorithm still succeeds with probability $\frac{1}{k!}(1 + O(1/q))$.

To give a gate-efficient algorithm, it suffices to show how to efficiently compute the function $\z^{-1}\colon R_k^\good \to T_k^\good$ (that is, to compute this function using $\poly(\log q)$ gates).

\begin{lemma}\label{lem:inverse}
Suppose there is an efficient algorithm to compute $\smash{\z^{-1}\colon R_k^\good \to T_k^\good}$.  Then the algorithm of \sec{algdesc} can be made gate-efficient (with $R_k$ replaced by $R_k^\good$ and $T_k$ by $T_k^\good$).
\end{lemma}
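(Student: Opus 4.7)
The algorithm of \sec{algdesc} breaks into four stages:
(i) prepare $\ket{\Psi_T} := |T_k^\good|^{-1/2}\sum_{(x,y) \in T_k^\good} \ket{x,y}$;
(ii) perform $k$ phase queries;
(iii) apply the in-place map $\ket{x,y}\mapsto\ket{\z(x,y)}$;
(iv) measure in the Fourier basis over $\F_q^{d+1}$.
I would show each stage is implementable in $\poly(\log q)$ gates, treating $d$ (and hence $k$) as constant.
Stages (ii) and (iv) are immediate: the Fourier transform over $\F_q^{d+1}$ factors into $d+1$ parallel one-dimensional Fourier transforms over $\F_q$, each implementable in $\poly(\log q)$ by the standard QFT, and a phase query is an ordinary query conjugated by this QFT on the answer register.
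For stage (iii), compute $\z(x,y) \in \F_q^{d+1}$ into a fresh register by polynomial arithmetic (a fixed polynomial, hence $\poly(\log q)$ gates) and then invoke the assumed efficient circuit for $\z^{-1}$ to subtract $\z^{-1}(\z(x,y))$ from the input register; since $(x,y) \in T_k^\good$ satisfies $\z^{-1}(\z(x,y)) = (x,y)$, the input register is cleanly zeroed and only $\ket{\z(x,y)}$ remains.

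\textbf{State preparation by sort-and-uncompute.}
The substantive step is (i).
Fix a total ordering of $\F_q$ and let $\z^{-1}$ return, for each $z \in R_k^\good$, the unique representative $(x,y) \in X_k^\good \times Y_k^\good$ with $\z(x,y)=z$ and $x_1 < \cdots < x_k$.
First, prepare the uniform superposition $\ket{\Phi} := |X_k^\good|^{-1/2}|Y_k^\good|^{-1/2}\sum_{(x,y) \in X_k^\good \times Y_k^\good} \ket{x,y}$.
The $Y_k^\good$-factor is a tensor of $k$ uniform superpositions over $\F_q^\times$.
For the $X_k^\good$-factor, prepare the $x_i$ sequentially: $x_1$ uniformly over $\F_q$, and then, conditioned on $x_1,\ldots,x_{i-1}$, prepare $x_i$ uniformly over the $q-i+1$ remaining elements of $\F_q$ via a reversible relabelling past the forbidden values.
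Each conditional step uses $\poly(k,\log q)$ gates.

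Next, compute the sorting permutation $\sigma(x,y) \in S_k$ into an ancilla by $O(k^2)$ pairwise comparisons over $\F_q$, then apply $\sigma(x,y)$ to the pairs $(x_i,y_i)$, obtaining
\begin{equation*}
|X_k^\good|^{-1/2}|Y_k^\good|^{-1/2}\sum_{(x,y) \in X_k^\good \times Y_k^\good} \ket{\sigma(x,y)\cdot (x,y)} \otimes \ket{\sigma(x,y)}.
\end{equation*}
By \lem{thresholdsize}, the assignment $(x,y) \mapsto (\sigma(x,y)\cdot(x,y),\, \sigma(x,y))$ is a bijection between $X_k^\good \times Y_k^\good$ and $T_k^\good \times S_k$ (given any $(x',y') \in T_k^\good$ and $\pi \in S_k$, the preimage is $\pi^{-1}\cdot(x',y')$), so the state factorizes as
\begin{equation*}
\Biggl(|T_k^\good|^{-1/2}\sum_{(x',y') \in T_k^\good} \ket{x',y'}\Biggr) \otimes \Biggl((k!)^{-1/2}\sum_{\pi \in S_k} \ket{\pi}\Biggr).
\end{equation*}
Finally, uncompute the $\pi$-register by inverting a direct preparation of the uniform superposition over $S_k$ (e.g., via the Lehmer-code encoding on $O(k\log k)$ qubits, using $\poly(k)$ gates), leaving $\ket{\Psi_T}$ as required.

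\textbf{Main obstacle.}
The conceptual heart is the sort-and-uncompute step: it works precisely because \lem{thresholdsize} forces every fibre of $\z$ over $R_k^\good$ to have size exactly $k!$, so the sort permutation is equidistributed over $S_k$ independently of the canonical representative and therefore decouples as a clean ancilla that can be uncomputed. Everything else ($\F_q$-arithmetic, polynomial evaluation, the one-dimensional QFT over $\F_q$, $k$-tuple comparisons and permutations, and uniform preparation over $S_k$) is routine and fits within $\poly(k,d,\log q)$, which is $\poly(\log q)$ per query for fixed $d$.
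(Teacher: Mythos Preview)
Your proof is correct, and stages (ii)--(iv) match the paper's treatment. The genuine difference is in stage (i), state preparation. The paper prepares $\ket{\Psi_T}$ by creating the uniform superposition over all of $\F_q^{d+1}$, applying the assumed $\z^{-1}$ circuit, flagging those $z\notin R_k^\good$ where it fails, and postselecting on the success flag (which occurs with probability $|R_k^\good|/q^{d+1}\approx 1/k!$, so a constant number of repetitions suffices for fixed $d$); it then uncomputes the $z$-register. Your approach instead builds the uniform superposition over $X_k^\good\times Y_k^\good$ directly, sorts, and peels off the permutation register.

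Each route has a small advantage. Yours is deterministic, avoids any repetition, and notably does not use the $\z^{-1}$ oracle at all for state preparation; it also does not rely on the implicit extra assumption in the paper's proof that the $\z^{-1}$ circuit can \emph{detect} inputs outside $R_k^\good$. On the other hand, the paper's argument is agnostic to the structure of the fibres: it works for any choice of representative set $T_k^\good$ and any $k$ for which $|R_k^\good|/q^{d+1}$ is not too small, whereas your sort-and-uncompute step is valid precisely because \lem{thresholdsize} (which needs $k=d/2+1/2$) forces each good fibre to be a single $S_k$-orbit, so that the sorted elements coincide with $T_k^\good$ and the permutation register decouples. Since \lem{inverse} sits in the $k=d/2+1/2$ section, this restriction is harmless here, but it is worth noting that your argument would not transfer directly to $k=d/2+1$ (where indeed the paper uses a different mechanism, \lem{super_rep}).
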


\begin{proof}
It is trivial to compute $\z\colon T_k^\good \to R_k^\good$ efficiently.  Given an efficient procedure for computing $\z^{-1}\colon R_k^\good \to T_k^\good$, this gives us the ability to efficiently compute $\z$ in place (that is, to perform the transformation $\ket{x,y} \mapsto \ket{z}$ as required by the algorithm).
To do this, we first compute $z$ in an ancilla register by evaluating $\z$ (which only requires arithmetic over $\F_q$) and then uncompute $(x,y)$ by applying the circuit for $\z^{-1}$ in reverse.

It remains to prepare the initial uniform superposition over $T_k^\good$.  This can also be done using the ability to compute $\z^{-1}$.  Suppose we create a uniform superposition over all of $z \in \F_q^{d+1}$ and then attempt to compute $\z^{-1}$.  If $z \notin \smash{R_k^\good}$, this is detected, and we can set a flag qubit indicating failure.  Thus we can prepare a state of the form
\begin{align}
	\frac{1}{\sqrt{q^{d+1}}}
	\Biggl(\sum_{(x,y) \in T_k^\good} \ket{\z(x,y),0,x,y} 
	+ \sum_{z \in \F_q^{d+1}\setminus R_k^\good} \ket{z,1,0,0}\Biggr).
\end{align}
A measurement of the flag qubit gives the outcome $0$ with probability $|R_k^\good|/q^{d+1}$.  Since this is our lower bound on the success probability of the overall algorithm, we do not have to repeat this process too many times before we successfully prepare the initial state (and by sufficiently many repetitions, we can make the error probability arbitrarily small).  When the measurement succeeds, we can uncompute the first register to obtain the state $\sum_{(x,y) \in T_k^\good} \ket{x,y}/|T_k^\good|^{1/2}$ as desired.
\end{proof}

In the remainder of this section, we describe how to efficiently compute $\z^{-1}(z)$ for $z \in R_k^\good$.
Our approach appeals to ``Prony's method''~\cite{PCM88} (a precursor to Fourier analysis) and the theory of linear recurrences.
We start with the following technical result, where $e_j$ denotes the $j$th elementary symmetric polynomial in $k$ variables, i.e.,
\begin{align}
  e_j(x_1,\dots,x_k) 
  = \sum_{1\leq i_1 < i_2 <\cdots < i_j \leq k} x_{i_1} x_{i_2}\cdots x_{i_j}. 
\end{align}

\begin{lemma}\label{lem:sympoly}
We have
\begin{align}
  x_i^k = - \sum_{j=1}^k x_i^{k-j} (-1)^j e_j(x_1,\ldots,x_k)
\label{eq:sympoly}
\end{align}
for all $i \in \{1,\ldots,k\}$.
\end{lemma}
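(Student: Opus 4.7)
The plan is to recognize this as a direct consequence of Vieta's formulas applied to the polynomial whose roots are $x_1,\ldots,x_k$. Define
\begin{align}
P(t) := \prod_{j=1}^k (t - x_j).
\end{align}
Expanding this product and collecting terms by the degree of $t$, the coefficient of $t^{k-j}$ is exactly $(-1)^j e_j(x_1,\ldots,x_k)$ for $j \in \{0,1,\ldots,k\}$ (with the convention $e_0 = 1$), giving
\begin{align}
P(t) = \sum_{j=0}^k (-1)^j e_j(x_1,\ldots,x_k) \, t^{k-j}.
\end{align}

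The second step is to substitute $t = x_i$. Since $x_i$ is among the roots of $P$, we have $P(x_i) = 0$, so
\begin{align}
0 = \sum_{j=0}^k (-1)^j e_j(x_1,\ldots,x_k) \, x_i^{k-j} = x_i^k + \sum_{j=1}^k (-1)^j e_j(x_1,\ldots,x_k) \, x_i^{k-j},
\end{align}
where in the last equality I separated the $j=0$ term (which contributes $x_i^k$ since $e_0 = 1$). Rearranging yields \eq{sympoly}.

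There is no real obstacle here: the identity is essentially the statement that the elementary symmetric polynomials give the coefficients of $\prod(t-x_j)$, together with the observation that each $x_i$ is a root. The only care needed is getting the signs right, which the sign convention $(-1)^j$ handles. This identity will presumably be used next to produce a linear recurrence for the power sums $\sum_i y_i x_i^j$ appearing in $\z(x,y)$, allowing one to solve for the elementary symmetric polynomials of the $x_i$'s (and then recover the $x_i$'s themselves by root-finding) from the components of a given $z \in R_k^{\good}$.
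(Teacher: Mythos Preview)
Your proof is correct and is in fact more direct than the paper's. The paper proves the identity by induction on $k$: it first reduces to $i=1$ by symmetry, checks $k=1$, and then in the inductive step uses the recursion $e_j(x_1,\ldots,x_{k+1}) = e_j(x_1,\ldots,x_k) + x_{k+1}\,e_{j-1}(x_1,\ldots,x_k)$ to push through the computation. Your argument bypasses all of this by observing that the identity is nothing more than $P(x_i)=0$ for $P(t)=\prod_j(t-x_j)$, once one knows the Vieta expansion $P(t)=\sum_{j=0}^k(-1)^j e_j\,t^{k-j}$. This is cleaner and makes the content of the lemma transparent; the paper's induction is more self-contained (it does not appeal to Vieta as a named fact) but otherwise gains nothing. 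Your closing remark about the intended use---deriving a linear recurrence for the $z_j=\sum_i y_i x_i^j$ and then recovering the $x_i$ as roots of the characteristic polynomial---is exactly how the paper proceeds.
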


\begin{proof}
Observe that it suffices to prove the lemma for $i=1$, since if we interchange the roles of $x_1$ and $x_\ell$ in \eq{sympoly} with $i=1$, we obtain \eq{sympoly} with $i=\ell$.
 
We apply induction on $k$.  If $k=1$ then the claim is trivial.  Now suppose the claim holds for a given value of $k$.
We have
\begin{align}
   e_j(x_1,\ldots,x_{k+1})
   &= e_j(x_1,\ldots,x_k) + x_{k+1} e_{j-1}(x_1,\ldots,x_k) \\
   &= e_j(x_1,\ldots,x_k) + x_{k+1} \tfrac{\partial}{\partial{x}_{k+1}} e_j(x_1,\ldots,x_{k+1}).
\end{align}
Therefore
\begin{align}
   -\sum_{j=1}^{k+1} x_1^{k+1-j} (-1)^j e_j(x_1,\ldots,x_{k+1})
   &= -\sum_{j=1}^{k+1} x_1^{k+1-j} (-1)^j \bigl[e_j(x_1,\ldots,x_k)\\
   &\qquad\quad + x_{k+1} \tfrac{\partial}{\partial{x}_{k+1}} e_j(x_1,\ldots,x_{k+1})\bigr] \nonumber \\
   &= x_1^{k+1} - x_{k+1} \tfrac{\partial}{\partial{x}_{k+1}} x_i^{k+1} \\
   &= x_1^{k+1}
\end{align}
(where the second equality uses the induction hypothesis).  
\end{proof}

Using this fact, we can show that each component of $\z(x,y)$ satisfies a $k$th-order linear recurrence.

\begin{lemma}\label{lem:recurrence}
If $z_j = \sum_{i=1}^k y_i x_i^j$ for all nonnegative integers $j$, then we have (for all nonnegative integers $n$)
\begin{align}
  z_{n+k} = -\sum_{j=0}^{k-1} (-1)^{k-j} e_{k-j}(x_1,\ldots,x_k) z_{n+j}.
\label{eq:recurrence}
\end{align}
\end{lemma}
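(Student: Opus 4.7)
The plan is to reduce the statement directly to \lem{sympoly}, which already expresses $x_i^k$ as a linear combination of the lower powers $x_i^0,\dots,x_i^{k-1}$ with coefficients given by the elementary symmetric polynomials. Since \lem{sympoly} holds for each $i\in\{1,\dots,k\}$ separately, and the coefficients $-(-1)^j e_j(x_1,\dots,x_k)$ are independent of $i$, we can combine the $k$ identities into a single scalar identity by taking the appropriate $y$-weighted sum.

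Concretely, I would first multiply both sides of \eq{sympoly} by $y_i x_i^n$, obtaining
\begin{align}
y_i x_i^{n+k} = -\sum_{j=1}^{k} y_i x_i^{n+k-j}\,(-1)^j e_j(x_1,\dots,x_k).
\end{align}
Summing over $i\in\{1,\dots,k\}$ and using the definition $z_m = \sum_{i=1}^k y_i x_i^m$ gives
\begin{align}
z_{n+k} = -\sum_{j=1}^{k}(-1)^j e_j(x_1,\dots,x_k)\, z_{n+k-j}.
\end{align}
A change of summation index $j \mapsto k-j$ then yields \eq{recurrence} exactly.

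There isn't really a hard step: everything is a consequence of linearity of the recurrence from \lem{sympoly} in the variables $x_i^n$, together with the observation that the coefficients $e_j(x_1,\dots,x_k)$ are symmetric and hence common to all $i$. The only thing to be slightly careful about is the reindexing of the sum and the resulting sign $(-1)^{k-j}$, which matches the statement after substitution. No assumptions beyond the setup (and $k\ge 1$) are needed, and the argument is valid for every nonnegative integer $n$.
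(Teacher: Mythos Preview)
Your proposal is correct and follows essentially the same approach as the paper: both arguments reduce the claim to \lem{sympoly}, multiply through by $y_i x_i^n$, sum over $i$, and reindex $j\mapsto k-j$. The only cosmetic difference is that the paper starts from the right-hand side of \eq{recurrence} and simplifies to $z_{n+k}$, whereas you start from \eq{sympoly} and build up to \eq{recurrence}.
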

\begin{proof}
The right-hand side of~\eq{recurrence} is
\begin{align}
 -\sum_{j=0}^{k-1} (-1)^{k-j} e_{k-j}(x_1,\ldots,x_k) \sum_{i=1}^k y_i x_i^{n+j}
  &= -\sum_{i=1}^k y_i \sum_{j=0}^{k-1} (-1)^{k-j} e_{k-j}(x_1,\ldots,x_k) x_i^{n+j} \\
  &= -\sum_{i=1}^k y_i \sum_{j=1}^k (-1)^j e_j(x_1,\ldots,x_k) x_i^{n+k-j} \\
  &= \sum_{i=1}^k y_i x_i^{n+k} =  z_{n+k}
\end{align}
as claimed, where the third equality uses \lem{sympoly}.
\end{proof}

We are now ready to describe the gate-efficient algorithm for polynomial interpolation.

\begin{proof}[Proof of \thm{efficiency}\itm{eff_threshold}: $k=d/2+1/2$]
By \lem{inverse}, it suffices to give an efficient algorithm for computing a representative $(x,y) \in Z^{-1}(z)^\good$ for any given $z \in R_k^\good$.

By \lem{recurrence}, the coefficients $a_j = -(-1)^{k-j} e_{k-j}(x_1,\ldots,x_k)$ of the linear recurrence \eq{recurrence} satisfy
\begin{align}
	H_k {\begin{pmatrix} a_0 \\ a_1 \\ \vdots \\ a_{k-1} \end{pmatrix}}
	= {\begin{pmatrix} z_k \\ z_{k+1} \\ \vdots \\ z_{2k-1} \end{pmatrix}}, 
\label{eq:hankeleq}
\quad\text{where}\quad
	H_k &:= {\begin{pmatrix}
  z_0 & z_1 & \cdots & z_{k-1} \\
  z_1 & z_2 & \cdots & z_k \\
  \vdots & \vdots & \ddots & \vdots\\
  z_{k-1} & z_k & \dots & z_{2(k-1)}
  \end{pmatrix}}
\end{align}
is a Hankel matrix.
Observe that
\begin{align}
H_k &=
V_k^T
{\begin{pmatrix}
y_1 & 0 & \cdots & 0 \\
0 & y_2 & & 0\\
\vdots & & \ddots & \vdots \\
0 & 0 & \cdots &  y_k
\end{pmatrix}}
V_k,
\quad\text{where}\quad
V_k := {\begin{pmatrix}
1 & x_1 & x_1^2 & \cdots & x_1^{k-1} \\
1 & x_2 & x_2^2 & \cdots & x_2^{k-1} \\
\vdots & \vdots & \vdots & & \vdots\\
1 & x_k & x_k^2 & \cdots & x_k^{k-1}
\end{pmatrix}}.
\end{align}
For $x \in X_k^\good$, the Vandermonde matrix $V_k$ (and its transpose) are invertible, and for $y \in Y_k^\good$, the diagonal matrix is invertible.  Then $H_k$ is invertible, and we have
\begin{align}
	{\begin{pmatrix} a_0 \\ a_1 \\ \vdots \\ a_{k-1} \end{pmatrix}}
	= H_k^{-1} {\begin{pmatrix} z_k \\ z_{k+1} \\ \vdots \\ z_{2k-1} \end{pmatrix}}.
\label{eq:polycoeffs}
\end{align}

We claim that for any $(x,y) \in Z^{-1}(z)^\good$, the values $x_1,\ldots,x_k$ must be roots of the characteristic polynomial
\begin{align}
	\chi(x) := x^k - \sum_{j=0}^{k-1} a_j x^j.
\label{eq:charpoly}
\end{align}
To see this, observe that
\begin{align}
	{\begin{pmatrix} z_k \\ z_{k+1} \\ \vdots \\ z_{2k-1} \end{pmatrix}} -
	H_k {\begin{pmatrix} a_0 \\ a_1 \\ \vdots \\ a_{k-1} \end{pmatrix}}
 = V_k^T
{\begin{pmatrix}\chi(x_1) \\ \chi(x_2) \\ \vdots \\ \chi(x_k)\end{pmatrix}}.
\end{align}
This must be the zero vector, and since the Vandermonde matrix is invertible, we see that $\chi(x_i) = 0$ for all $i \in \{1,\ldots,k\}$.\footnote{Since a polynomial of degree $k$ can have at most $k$ roots, this shows that the values $x_1,\ldots,x_k$ are unique up to permutation, giving $|{Z^{-1}(z)^\good}|=k!$ as shown in \lem{thresholdsize}.}

Finally, observe that the values $y_1,\ldots,y_k$ satisfy
\begin{align}
V_k^T
\begin{pmatrix}
y_1  \\ \vdots \\ y_k
\end{pmatrix}
 = 
\begin{pmatrix}
z_0 \\ z_1 \\ \vdots\\ z_{k-1}
\end{pmatrix}.
 \label{eq:yfromxandz}
\end{align}
Since the Vandermonde matrix is invertible, we see that $y$ is uniquely determined by $z$ and $x$.

To compute a unique representative of a given $z \in R_k^\good$, we use equation~\eq{polycoeffs} to efficiently compute the coefficients $a_0,\ldots,a_{k-1}$ of the characteristic polynomial $\chi(x)$.  We can then determine $x \in X_k^\good$ by finding the roots of this polynomial, which can be done in time $\poly(k, \log q)$ using a randomized algorithm \cite[Chapter 14]{GG13}.  Finally, we can determine $y \in Y_k^\good$ by solving a linear system of equations, namely~\eq{yfromxandz}.

This procedure does not uniquely specify $(x,y)$ because any permutation of the indices (acting identically on $x$ and $y$) gives an equivalent solution.  To choose a unique $(x,y) \in T_k^\good$, we simply require that the entries of $x$ occur in lexicographic order with respect to some fixed representation of $\F_q$.
\end{proof}

\subsection{Algorithm for $k = d/2 + 1$ queries}

We now present a similar algorithm for the case $k=d/2+1$ that also has gate complexity $\poly(\log q)$, although it has more overhead as a function of $d$.

To apply the approach of \sec{efficiency_at_threshold}, we again focus on solutions of $\z(x,y)=z$ with $(x,y) \in X^\good \times Y^\good$.  However, recall that our lower bound on the success probability for $k=d/2+1$ in \sec{above_threshold} used all solutions $(x,y) \in \F_q^k \times \F_q^k$.  Thus we begin by showing that the success probability of the algorithm remains close to $1$ even when restricted to good solutions.

\begin{lemma}\label{lem:rgood_above_threshold}
If $k = d/2+1$, then $|R_k^\good|/q^{d+1} = 1-O(1/q)$.
\end{lemma}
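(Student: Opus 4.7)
The plan is to rerun the second-moment argument from the proof of \thm{rangesize}\itm{rangesize_aboveth}, but with the counting function $|Z^{-1}(z)|$ replaced by its ``good'' restriction $|Z^{-1}(z)^\good| := |Z^{-1}(z) \cap (X_k^\good \times Y_k^\good)|$, and then deduce via Chebyshev that even this restricted quantity is already nonzero for all but an $O(1/q)$ fraction of $z \in \F_q^{d+1}$.

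First I would compute the mean
$\mu_\good := \frac{1}{q^{d+1}} \sum_{z \in \F_q^{d+1}} |Z^{-1}(z)^\good| = |X_k^\good|\,|Y_k^\good|/q^{d+1}$.
Using $|X_k^\good| = q!/(q-k)! = q^k(1-O(1/q))$ and $|Y_k^\good| = (q-1)^k = q^k(1-O(1/q))$, together with the fact that $2k-(d+1) = 1$ when $k = d/2+1$, this yields $\mu_\good = q(1-O(1/q))$.

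Next, since $|Z^{-1}(z)^\good| \le |Z^{-1}(z)|$ pointwise, the character-sum computation already carried out in \sec{above_threshold} gives
\begin{align*}
\frac{1}{q^{d+1}}\sum_{z \in \F_q^{d+1}} |Z^{-1}(z)^\good|^2 \le \frac{1}{q^{d+1}}\sum_{z \in \F_q^{d+1}} |Z^{-1}(z)|^2 \le q^2 + q\,d^{d+2}
\end{align*}
when $k = d/2+1$ (using $4k - (d+1) = d+3$ and $2k = d+2$). Subtracting $\mu_\good^2 = q^2(1 - O(1/q))$ then gives a variance $\sigma_\good^2 \le (q^2 + q\,d^{d+2}) - q^2(1 - O(1/q)) = O(q)$ for fixed $d$, and Chebyshev's inequality would then yield
\begin{align*}
\Pr_{z}\bigl[|Z^{-1}(z)^\good| = 0\bigr] \le \Pr_{z}\bigl[\bigl||Z^{-1}(z)^\good| - \mu_\good\bigr| \ge \mu_\good\bigr] \le \frac{\sigma_\good^2}{\mu_\good^2} = O(1/q),
\end{align*}
from which $|R_k^\good|/q^{d+1} = 1 - O(1/q)$ follows immediately.

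The one delicate point is that the crude bound $\sum_z |Z^{-1}(z)^\good|^2 \le \sum_z |Z^{-1}(z)|^2$ discards the cancellation that made the paper's original variance estimate tight, so I have to check that enough slack remains. Fortunately, since $\mu_\good$ differs from $\mu = q$ only by a factor $1 - O(1/q)$, the leading $q^2$ in the second-moment bound still almost cancels $\mu_\good^2$, leaving only an $O(q)$ residue of the same order as the non-degenerate contribution $q\,d^{d+2}$ coming from nonzero $\lambda \in \F_q^{d+1}$. This is just enough for Chebyshev to still deliver the $O(1/q)$ bound; verifying this gap survives is the sole step where one has to be careful.
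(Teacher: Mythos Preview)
Your argument is correct and follows the same overall second-moment/Chebyshev template as the paper, but you take a genuinely different shortcut in bounding the variance. The paper redoes the character-sum computation for the restricted count $|Z^{-1}(z)^\good|$: summing over $v,y\in Y_k^\good=(\F_q^\times)^k$ produces factors $(q\,\delta[\cdot]-1)$ instead of $q\,\delta[\cdot]$, which are then crudely bounded by $(q\,\delta[\cdot]+1)$ and the sums extended back to all of $\F_q^k$, yielding $(\sigma^\good)^2\le (q(d+1))^{2k}/q^{d+1}$. You instead bypass this entirely by using the pointwise bound $|Z^{-1}(z)^\good|\le|Z^{-1}(z)|$ to dominate the good second moment by the full one already computed in \sec{above_threshold}, and then check that $\mu^2-\mu_\good^2=q^2-q^2(1-O(1/q))^2=O(q)$ is of the same order as the residual $q\,d^{d+2}$, so the cancellation survives.

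Your route is shorter and more transparent, since it reuses the earlier bound verbatim rather than repeating a modified calculation. The paper's direct computation has the minor advantage of making the constant explicit (essentially $(d+1)^{2k}$ in place of $d^{2k}$), whereas your $O(q)$ absorbs an unspecified constant coming from the $1-O(1/q)$ factor in $\mu_\good$; but for the stated lemma this is immaterial.
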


\begin{proof}
We repeat the second moment argument of \sec{above_threshold}, but now restricted to good solutions.  Under the uniform distribution on $z \in \F_q^{d+1}$, we have
\begin{align}
	\mu^\good
	:= \frac{1}{q^{d+1}} \sum_{z \in \F_q^{d+1}} |{Z^{-1}(z)^\good}|
	= q^{2k-(d+1)} (1 - O(1/q))
\label{eq:atg_mean}
\end{align}
by \eq{numgoodz}.  Similarly to the previous second moment calculation, we have
\begin{align}
	\sum_{z \in \F_q^{d+1}} \! |{Z^{-1}(z)^\good}|^2
	&= \sum_{u,x \in X_k^\good} \sum_{v,y \in Y_k^\good} \!\! \delta[\z(u,v)=\z(x,y)] \\
	&= q^{d+1} (\mu^\good)^2 + \frac{1}{q^{d+1}} \sum_{u,x \in X_k^\good} \sum_{v,y \in Y_k^\good} \sum_{\lambda \in \F_q^{d+1}} \!\! \chr(\lambda \cdot (\z(u,v)-\z(x,y))).
\end{align}
Thus we have
\begin{align}
	&(\sigma^\good)^2
	:= \frac{1}{q^{d+1}} \sum_{z \in \F_q^{d+1}} |{Z^{-1}(z)^\good}|^2 - (\mu^\good)^2 \\
	&\quad = \frac{1}{q^{2(d+1)}} \sum_{\lambda \in \F_q^{d+1}} \sum_{u,x \in X_k^\good} \sum_{v,y \in Y_k^\good} \prod_{i=1}^k \chr(v_i \tsum_{j=0}^d \lambda_j u_i^j) \, \chr(-y_i \tsum_{j=0}^d \lambda_j x_i^j) \\
	&\quad= \frac{1}{q^{2(d+1)}} \sum_{\lambda \in \F_q^{d+1}} \sum_{u,x \in X_k^\good} \prod_{i=1}^k (q \, \delta[\tsum_{j=0}^d \lambda_j u_i^j=0] - 1) (q \, \delta[\tsum_{j=0}^d \lambda_j x_i^j = 0] - 1) \\
	&\quad\le \frac{1}{q^{2(d+1)}} \sum_{\lambda \in \F_q^{d+1}} \sum_{u,x \in \F_q^k} \prod_{i=1}^k (q \, \delta[\tsum_{j=0}^d \lambda_j u_i^j=0] + 1) (q \, \delta[\tsum_{j=0}^d \lambda_j x_i^j = 0] + 1) \\
	&\quad\le \frac{(q(d+1))^{2k}}{q^{d+1}}
	\label{eq:atg_variance}
\end{align}
(which is identical to the previous bound for $\sigma^2$ except that $d$ is replaced by $d+1$).
Therefore, by the Chebyshev inequality, we have
\begin{align}
	\Pr[Z^{-1}(z)^\good]
	\le \frac{(\sigma^\good)^2}{(\mu^\good)^2}
	\le (d+1)^{2k} q^{d+1-2k}.
\end{align}
With $k=d/2+1$, we find
\begin{align}
	\frac{|R_k^\good|}{q^{d+1}}
	= 1 - \Pr[Z^{-1}(z)^\good = 0] \ge 1 - \frac{(d+1)^{2k}}{q} = 1 - O(1/q)
\end{align}
as claimed.
\end{proof}

Now consider the problem of computing a value $(x,y) \in X^\good \times Y^\good$ such that $\z(x,y)=z$ for some given $z \in R_k^\good$.  We can approach this task using the strategy outlined in \sec{efficiency_at_threshold}.  With $k=d/2+1$, we have $2k-1 = d+1$, so the last entry in the vector on the right-hand side of \eq{polycoeffs} is not specified.  Nevertheless, for any fixed $(x,y) \in \F_q^k \times \F_q^k$, the value $z_{d+1} = \z(x,y)_{d+1}$ is well-defined by extending \eq{z} to $j=d+1$, so we can find $(x,y) \in X^\good \times Y^\good$ by searching for some value of $z_{d+1} \in \F_q$ for which the algorithm of \sec{efficiency_at_threshold} succeeds at finding $k$ distinct roots $x_1,\ldots,x_k \in \F_q$ of the characteristic polynomial \eq{charpoly}.

We claim that choosing a random $z_{d+1} \in \F_q$ gives a solution with probability nearly $1/k!$.

\begin{lemma}\label{lem:random_extra_z}
Suppose $z=(z_0,\ldots,z_d)$ is chosen uniformly at random from $\F_q^{d+1}$.  Then with probability $1-o(1)$ (over the choice of $z$), choosing $z_{d+1}$ uniformly at random from $\F_q$ and solving for $(x,y) \in Z^{-1}(z)^\good$ as in the proof of \thm{efficiency}\itm{eff_aboveth} gives a solution with probability $(1-o(1))/k!$ (over the choice of $z_{d+1}$).
\end{lemma}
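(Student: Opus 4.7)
The plan is to apply the algorithm of \sec{efficiency_at_threshold} to the extended vector $(z_0,\ldots,z_{d+1})$, treating the task as degree-$(d+1)$ interpolation with $k = (d+1)/2 + 1/2 = d/2+1$ queries (valid because $d$ is even, so $d+1$ is odd). To set this up, I would define the extended function $\tilde\z\colon \F_q^k \times \F_q^k \to \F_q^{d+2}$ by $\tilde\z(x,y)_j := \sum_{i=1}^k y_i x_i^j$ for $j \in \{0,\ldots,d+1\}$ and let $\tilde R_k^\good := \tilde\z(X_k^\good \times Y_k^\good)$. Note that for $z \in R_k^\good$, the Hankel matrix $H_k$ from \eq{hankeleq} is invertible (via its factorization as $V_k^T \mathrm{diag}(y) V_k$ using any good preimage), so the algorithm is well-defined.

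The first step is a combinatorial identity: for any fixed $z \in \F_q^{d+1}$, the number of $z_{d+1} \in \F_q$ for which the algorithm of \sec{efficiency_at_threshold} succeeds equals $|Z^{-1}(z)^\good|/k!$. By the correctness analysis from the proof of \thm{efficiency}\itm{eff_threshold} applied with degree $d+1$, success occurs exactly when $(z,z_{d+1}) \in \tilde R_k^\good$, and \lem{thresholdsize} applied in the degree-$(d+1)$ regime (where its hypothesis $2k \le (d+1)+1$ holds with equality) shows that each element of $\tilde R_k^\good$ has exactly $k!$ preimages in $X_k^\good \times Y_k^\good$. Grouping the preimages of $\tilde\z$ by their projection onto the first $d+1$ coordinates then yields the claimed identity, because the fibres of the projection $\tilde R_k^\good \to \F_q^{d+1}$ over $z$ are in bijection with $Z^{-1}(z)^\good/S_k$.

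The second step is a standard second-moment argument using the bounds from the proof of \lem{rgood_above_threshold}: the mean $\mu^\good = q(1-O(1/q))$ from \eq{atg_mean} and the variance $(\sigma^\good)^2 \le (q(d+1))^{2k}/q^{d+1} = O(q)$ from \eq{atg_variance} (for fixed $d$). With deviation threshold $q^{3/4}$, Chebyshev gives $\Pr_z[\bigl||Z^{-1}(z)^\good| - \mu^\good\bigr| > q^{3/4}] = O(q^{-1/2}) = o(1)$, so for a $(1-o(1))$-fraction of $z \in \F_q^{d+1}$ we have $|Z^{-1}(z)^\good| = q(1-o(1))$. On this event the success probability over uniformly random $z_{d+1}$ is $|Z^{-1}(z)^\good|/(qk!) = (1-o(1))/k!$, as claimed.

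The main obstacle is making the first step airtight in the borderline case $2k = (d+1)+1$: one must check that \lem{thresholdsize} still applies (it does, with equality in the inequality driving its Vandermonde argument), and that when the algorithm reports success the resulting pair is genuinely in $X_k^\good \times Y_k^\good$ rather than merely in $\F_q^k \times \F_q^k$. This last point reduces to observing that whenever $(z,z_{d+1}) \in \tilde R_k^\good$, the characteristic polynomial $\chi$ factors as $\prod_i (X - x_i)$ with distinct $x_i \in \F_q$ (so the root-finding step succeeds in $X_k^\good$), and the subsequent Vandermonde solve recovers some $y \in Y_k^\good$ by the uniqueness of the good preimage up to permutation.
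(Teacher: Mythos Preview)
Your proposal is correct and follows essentially the same approach as the paper: both arguments (i) use the fact that each choice of $z_{d+1}$ yields either $0$ or $k!$ good preimages (the paper phrases this via uniqueness of the root set of the characteristic polynomial, while you phrase it as \lem{thresholdsize} applied at degree $d+1$---these are the same observation), and then (ii) apply the second-moment bound from \lem{rgood_above_threshold} to conclude that $|Z^{-1}(z)^\good| = q(1-o(1))$ with probability $1-o(1)$. Your write-up is somewhat more explicit than the paper's (you state the exact identity $|\{z_{d+1}:\text{success}\}| = |Z^{-1}(z)^\good|/k!$ rather than just a lower bound, and you flag the check that the recovered $y$ lies in $Y_k^\good$), but the substance is the same.
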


\begin{proof}
For any $z \in R_k^\good$, each value of $z_{d+1} \in \F_q$ gives a unique set of roots of the characteristic polynomial \eq{charpoly}, and hence corresponds to either $0$ or $k!$ solutions $(x,y) \in Z^{-1}(z)^\good$.
By a similar second moment argument as in \eq{cheby_pgm}, but using the mean \eq{atg_mean} and variance \eq{atg_variance} of $Z^{-1}(z)^\good$, we have $|Z^{-1}(z)^\good| = q(1-o(1))$ with probability $1-o(1)$ over the uniform choice of $z \in \F_q^{d+1}$.
Thus the number of values of $z_{d+1}$ that lead to a valid solution $(x,y) \in Z^{-1}(z)^\good$ is at least $q(1-o(1))/k!$ with probability $1-o(1)$ over the choice of $z$.  Since there are $q$ possible values of $z_{d+1}$, choosing $z_{d+1}$ at random leads to a valid representative $(x,y) \in Z^{-1}(z)^\good$ with probability $(1-o(1))/k!$, again with probability $1-o(1)$ over the uniform choice of $z$.
\end{proof}

\lem{random_extra_z} gives a method for computing a representative $(x,y) \in X^\good \times Y^\good$ such that $\z(x,y)=z$: simply choose $z_{d+1} \in \F_q$ at random until we find a solution.  Repeating this process $O(k!)$ times suffices to find a solution with constant probability (for almost all $z$).  However, since this approach constructs a random $(x,y) \in Z^{-1}(z)^\good$ rather than a unique representative, it does not define a set $T_k^\good$, and it cannot be directly applied to our quantum algorithm as described so far.
Instead, we construct an equivalent algorithm that represents the sets $Z^{-1}(z)^\good$ using quantum superpositions.

\begin{lemma}\label{lem:super_rep}
Suppose there is an efficient algorithm to generate the quantum state 
\begin{align}
  \ket{Z^{-1}(z)^\good}
  := \frac{1}{\sqrt{|Z^{-1}(z)^\good|}} \sum_{(x,y) \in Z^{-1}(z)^\good} \ket{x,y}
\end{align}
for any given $z \in R_k^\good$.  Then there is a gate-efficient $k$-query quantum algorithm for the polynomial interpolation problem, succeeding with probability $|{R_k^\good}|/q^{d+1}$.
\end{lemma}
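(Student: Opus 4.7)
The plan is to mirror the structure of \lem{inverse}, replacing the deterministic computation of canonical representatives with coherent use of the state-preparation subroutine. Let $P$ denote the unitary (built from the assumed subroutine together with a failure-flag ancilla) that maps $\ket{z}\ket{0}\mapsto \ket{z}\ket{Z^{-1}(z)^\good}$ for $z\in R_k^\good$ and flags failure for $z\notin R_k^\good$.

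First, I would prepare a uniform superposition over $z\in\FF_q^{d+1}$, apply $P$ coherently, and measure the failure flag. On the success branch (which occurs with probability $|R_k^\good|/q^{d+1}$), this yields
\begin{align}
\frac{1}{\sqrt{|R_k^\good|}}\sum_{z\in R_k^\good}\ket{z}\ket{Z^{-1}(z)^\good}.
\end{align}
Since $P$ uses no queries, this preparation can be repeated freely until it succeeds (or accelerated via amplitude amplification) without affecting the query count of the overall algorithm.

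Next, apply $k$ phase queries in parallel to the $(x,y)$-register. Because every $(x,y)\in Z^{-1}(z)^\good$ satisfies $\z(x,y)=z$, the phase $\chr(c\cdot z)$ is common across the preimage superposition and factors out, leaving
\begin{align}
\frac{1}{\sqrt{|R_k^\good|}}\sum_{z\in R_k^\good}\chr(c\cdot z)\ket{z}\ket{Z^{-1}(z)^\good}.
\end{align}
Applying $P^{-1}$ then disentangles the second register into $\ket{0}$, producing $\ket{\hat c_{R_k^\good}}\ket{0}$; an inverse Fourier transform and computational-basis measurement then return $c$ with probability $|R_k^\good|/q^{d+1}$, exactly as claimed. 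Note that this argument does not require $|Z^{-1}(z)^\good|$ to be constant across $z$---the preimage-size factors cancel because the uniform superposition $\ket{Z^{-1}(z)^\good}$ is individually normalized.

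The only subtlety---analogous to what is handled in \lem{inverse}---is ensuring that the failure branch of $P$ is recorded on a dedicated flag qubit that is cleanly decoupled from the workspace, so that postselection and the subsequent uncomputation via $P^{-1}$ do not leave residual garbage that would spoil interference. Once this bookkeeping is arranged, the query count is exactly $k$, and the gate complexity is $\poly(\log q)$ times the cost of $P$ and $P^{-1}$, which is efficient by hypothesis.
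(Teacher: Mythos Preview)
Your proposal is correct and follows essentially the same approach as the paper's proof: build the state-generation map into a flagged unitary, postselect on the success branch to get a uniform superposition over $R_k^\good$ entangled with the corresponding preimage states, apply the $k$ phase queries (which act identically across each $Z^{-1}(z)^\good$), and uncompute. The only cosmetic difference is that the paper first converts the two-register map into an in-place transformation $\ket{z}\mapsto\ket{Z^{-1}(z)^\good}$ (by uncomputing $z$ via $\z$) before running the algorithm, whereas you keep the $\ket{z}$ register throughout and uncompute the $(x,y)$-register at the end via $P^{-1}$; these are equivalent.
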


\begin{proof}
We essentially replace $(x,y) \in T_k^\good$ by $\ket{Z^{-1}(\z(x,y))^\good}$ throughout the algorithm.  More concretely, we proceed as follows.

Observe that the ability to perform the given state generation map $\ket{z} \mapsto \ket{z}\ket{Z^{-1}(z)^\good}$ implies the ability to perform the in-place transformation
\begin{align}
	\ket{z} \mapsto \ket{Z^{-1}(z)^\good}.
\label{eq:inplace}
\end{align}
After applying the state generation map, we simply uncompute the map $\z$ to erase the register $\ket{z}$.

The algorithm begins by creating a uniform superposition over all of $z \in \F_q^{d+1}$ and applying the map \eq{inplace}.  As in the proof of \lem{inverse}, we can detect whether $z \notin \smash{R_k^\good}$, and we can postselect on the outcomes for which $z \in \smash{R_k^\good}$ with reasonable overhead, giving the state $\sum_{z \in R_k^\good} \ket{Z^{-1}(z)^\good}/|R_k^\good|^{1/2}$.
Then perform $k$ phase queries and apply the inverse of the transformation \eq{inplace}, giving the state
\begin{align}
	\frac{1}{\sqrt{|R_k^\good|}} \sum_{z \in R_k^\good} \chr(c \cdot z) \ket{Z^{-1}(z)^\good}
  \mapsto
	\frac{1}{\sqrt{|R_k^\good|}} \sum_{z \in R_k^\good} \chr(c \cdot z) \ket{z}.
\end{align}
As discussed in \sec{algdesc}, measuring this state gives $c$ with probability $|R_k^\good|/q^{d+1}$.
\end{proof}

Finally, we show how to prepare $\ket{Z^{-1}(z)^{\good}}$ and thereby give a gate-efficient quantum algorithm for polynomial interpolation with $k=d/2+1$ queries.

\begin{proof}[Proof of \thm{efficiency}\itm{eff_aboveth}: $k=d/2+1$]
We use $\ket{Z^{-1}(z)^{\good}}$ as a quantum representative of the set of solutions $Z^{-1}(z)^\good$ as described in \lem{super_rep}.  We claim that we can efficiently perform the transformation $\ket{z} \mapsto \ket{Z^{-1}(z)^{\good}}$ for a fraction $1-o(1)$ of those $z \in R_k^\good$, which in turn are a fraction $1-o(1)$ of all $z \in \F_q^{d+1}$ (by \lem{rgood_above_threshold}), giving the claimed success probability.

To prepare $\ket{Z^{-1}(z)^\good}$, we first prepare a uniform superposition over $z_{d+1} \in \F_q$ and use the procedure of \sec{efficiency_at_threshold} to compute the corresponding $(x,y)$, if it exists.  \lem{random_extra_z} shows that a fraction $(1-o(1))/k!$ of the values of $z_{d+1}$ correspond to a valid $(x,y)$, so this process can be boosted to prepare a state close to $\ket{Z^{-1}(z)^\good}$ with overhead $O(k!)$ (or with amplitude amplification, $O(\sqrt{k!})$), which in particular is independent of $q$.  We can easily uncompute $z_{d+1}$ given $(x,y)$, giving the desired transformation.
\end{proof}

\section{Open problems}
\label{sec:discussion}

In this paper, we have precisely characterized the quantum query complexity of polynomial interpolation. We conclude by briefly discussing some possible directions for future work.

In \sec{efficiency}, we gave an algorithm for the case $k = d/2 + 1$ whose gate complexity is larger than its query complexity by a factor of $k! \poly(\log q)$. This gate complexity is polynomial in $\log(q)$ but superexponential in $d$. Is it possible to give an algorithm with gate complexity only $\poly(d,\log q)$?

A natural extension of our results would be to consider the problem of learning a multivariate polynomial $f\in \FF_q[X_1,\dots,X_n]$ of degree at most $d$. Montanaro gave asymptotically optimal bounds for this problem assuming $f$ is multilinear \cite{Mon12}, but it is also natural to consider the more general case where $f$ is not necessarily multilinear. The quantum algorithm described in \sec{algdesc} can be extended to the multivariate case in a fairly straightforward manner, and we conjecture that it performs as follows.

\begin{conjecture}
	\label{conj:MultiVariate}
For any fixed positive integers $d$ and $n$, there exists a $k$-query quantum algorithm for interpolating a degree-$d$ multivariate polynomial in $n$ variables that, as $q$ grows, has success probability $1-o(1)$ provided $k>\binom{n+d}{d}/(n+1)$. 
\end{conjecture}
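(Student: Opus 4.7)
The plan is to generalize the algorithm of \sec{algdesc} to the multivariate setting and then bound the size of the resulting range $R_k$. Define $Z\colon (\F_q^n)^k \times \F_q^k \to \F_q^D$, with $D := \binom{n+d}{d}$, by $Z(x,y)_\alpha := \sum_{i=1}^k y_i x_i^\alpha$ for each exponent vector $\alpha$ with $|\alpha| \le d$, where $x_i^\alpha := \prod_{j=1}^n (x_i^{(j)})^{\alpha_j}$. With $R_k := Z((\F_q^n)^k \times \F_q^k)$, the algorithm of \sec{algdesc} carries over word-for-word---prepare a uniform superposition over a set of representatives of $R_k$, perform $k$ parallel phase queries to acquire the phase $\chr(c\cdot Z(x,y))$, uncompute $Z$ in place, and measure in the Fourier basis---and succeeds with probability $|R_k|/q^D$. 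So the conjecture reduces to showing $|R_k|/q^D = 1-o(1)$ whenever $k > D/(n+1)$.

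A natural starting point is to repeat the second-moment argument of \sec{above_threshold}. Writing $\lambda \cdot Z(u,v) = \sum_i v_i P_\lambda(u_i)$ with $P_\lambda(X) := \sum_\alpha \lambda_\alpha X^\alpha$, one gets $\sigma^2 \le q^{2k-2D}\sum_{\lambda \ne 0} N(\lambda)^{2k}$, where $N(\lambda)$ counts the $\F_q$-zeros of $P_\lambda$ in $\F_q^n$. Schwartz--Zippel ($N(\lambda) \le dq^{n-1}$) plus Chebyshev then yields $|R_k|/q^D \ge 1 - O(q^{D-2k})$, which is $1-o(1)$ only when $k > D/2$. Sharpening $N(\lambda)$ via Lang--Weil or Deligne does not help here, since the typical $\lambda$ already has $N(\lambda)\sim q^{n-1}$ and the $2k$th power of this dominates the sum.

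To recover the tighter threshold $k > D/(n+1)$, I would pivot to an algebraic-geometric analysis. Observe that $Z(x,y) = \sum_{i=1}^k y_i\, v(x_i)$ with $v(x) := (x^\alpha)_{|\alpha|\le d}$, so the Zariski closure $\overline{R_k}$ in affine $D$-space is the affine cone over the $k$-th secant variety $\sigma_k(\nu_d(\mathbb{P}^n))$ of the Veronese $\nu_d(\mathbb{P}^n) \subset \mathbb{P}^{D-1}$. By the Alexander--Hirschowitz theorem, outside a known short list of exceptional triples this secant variety attains its expected dimension $\min(k(n+1)-1, D-1)$; so as soon as $k(n+1) > D$ we have $\overline{R_k}$ equal to the whole ambient space. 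Chevalley's theorem then places $\F_q^D \setminus R_k$ inside a proper subvariety of dimension $<D$ (after identifying a geometrically irreducible generic fiber of $Z$, to which Lang--Weil can be applied to rule out $\F_q$-rational points of $\overline{R_k}$ that happen to miss $R_k$), and such a subvariety carries at most $O(q^{D-1})$ rational points, giving $|R_k|/q^D \ge 1 - O(1/q)$.

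The main obstacle is controlling the Alexander--Hirschowitz exceptions. For $d=2$, $\overline{R_k}$ is the variety of symmetric $(n+1)\times(n+1)$ matrices of rank at most $k$, whose ambient-space threshold is $k \ge n+1$ rather than $k > (n+2)/2$, suggesting that the conjecture as stated may need refinement in this regime; the four sporadic triples $(d,n,k)\in\{(4,2,5),(4,3,9),(3,4,7),(4,4,14)\}$ require analogous case-by-case treatment. Beyond these exceptions, the remaining work is making the passage from geometric dominance over $\overline{\F_q}$ to an effective $1-O(1/q)$ bound on $\F_q$-rational-point counts completely rigorous, which should follow from a careful application of Lang--Weil to the generic fiber of $Z$ together with standard estimates on $\F_q$-points of proper subvarieties.
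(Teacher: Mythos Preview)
This statement is a \emph{conjecture} in the paper; there is no proof to compare against. The discussion in \sec{discussion} already contains your first two observations: the algorithm of \sec{algdesc} generalizes with success probability $|\cR_k|/q^J$, and the second-moment argument of \sec{above_threshold} only reaches the threshold $k>J/2$. The paper then sketches essentially the same algebraic-geometric route you propose---show that the fiber $Z^{-1}(z)$ is absolutely irreducible for almost all $z$ and invoke Lang--Weil---but explicitly states that the known Bertini-type irreducibility results do not settle this.

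Your identification of $\overline{R_k}$ with the cone over $\sigma_k(\nu_d(\mathbb{P}^n))$ and the appeal to Alexander--Hirschowitz is a genuine addition beyond the paper's discussion. However, your own $d=2$ observation is more damaging than ``needs refinement'': for $d=2$ and $n\ge 3$ the range $(n+2)/2<k\le n$ is nonempty, and for such $k$ the closure $\overline{R_k}$ is the proper subvariety of rank-$\le k$ symmetric $(n{+}1)\times(n{+}1)$ matrices, so $|R_k|=O(q^{D-1})$ and the success probability tends to $0$. That is a counterexample to the conjecture as stated, and the four sporadic Alexander--Hirschowitz triples give further ones. Even away from the defective cases, your final step---passing from $\overline{R_k}=\mathbb{A}^D$ over $\overline{\F_q}$ to $|R_k|/q^D=1-o(1)$ over $\F_q$---requires exactly the geometric irreducibility of the generic fiber that the paper singles out as the open problem; your parenthetical ``after identifying a geometrically irreducible generic fiber'' is where all the difficulty lives, and nothing in the proposal establishes it. (You should also note that Alexander--Hirschowitz is usually stated in characteristic zero, so its validity in characteristic $p$ is an additional point to check.)
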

Note that classically one needs $\binom{n+d}{d}$ queries to solve the same problem, so our conjecture states that the quantum query complexity is smaller by a factor of $n+1$. We now discuss why computing the success probability of the quantum algorithm appears to be a difficult problem in algebraic geometry. 

Let $f\in\FF_q[X_1,\dots,X_n]$ be of degree at most $d$. 
For $j\in\NN^n$ and $x\in\FF_q^n$, we let 
\begin{equation}
x^j := \prod_{t=1}^n x_t^{j_t}. 
\end{equation}
To define the set of possible polynomials, we use the set of allowed exponents 
\begin{equation}
\cJ := \{j \in\NN^n: j_1+\cdots +j_n \leq d \}, 
\end{equation} 
with size
\begin{equation}
J := |\cJ| = \binom{n+d}{d}.
\label{eq:multexp}
\end{equation} 

We now define the function 
$\z\colon (\FF_q^n)^k\times \FF_q^k \rightarrow \FF_q^J$  by 
\begin{equation}
\z(x,y)_j = \sum_{i=1}^k y_i x_i^j
\end{equation} 
and consider its range 
\begin{equation}
\cR_k
:= \z((\FF_q^n)^k\times \FF_q^k)
\subseteq \FF_q^J. 
\label{eq:multrange}
\end{equation}
A straightforward generalization of the univariate interpolation algorithm described in \sec{algdesc} gives a multivariate interpolation algorithm with success probability $|\cR_k|/q^J$. We expect that this algorithm solves the interpolation problem with probability $1-o(1)$ using $\floor{J/(n+1)}+1$ queries.  This would be implied by the following:

\begin{conjecture}
	\label{conj:MultVar}
	With $J := \binom{n+d}{d}$ and $\cR_k$ as in \eq{multrange}, we have
	$|\cR_k| = q^{J} (1-o(1))$ provided $k>J/(n+1)$.
\end{conjecture}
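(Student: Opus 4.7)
The plan is to carry out the second-moment argument of \sec{above_threshold} in the multivariate setting, observe that it falls short of the conjectured threshold, and then switch to an algebraic-geometric route that should close the gap. First I would note that under uniform $z \in \F_q^J$ the mean of $|\z^{-1}(z)|$ is $\mu = q^{k(n+1)-J}$, and Fourier inversion combined with the identity $\sum_{v \in \F_q} \chr(vP(u)) = q\,\delta[P(u)=0]$ reduces the second moment to
\begin{align}
\sum_{z \in \F_q^J} |\z^{-1}(z)|^2 = \frac{q^{2k}}{q^J} \sum_{\lambda \in \F_q^J} N(\lambda)^{2k},
\end{align}
where $N(\lambda) := |\{x \in \F_q^n : P_\lambda(x)=0\}|$ and $P_\lambda(X) := \sum_{j \in \cJ} \lambda_j X^j$. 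Even the optimistic heuristic $N(\lambda) \approx q^{n-1}$ for nonzero $\lambda$ yields only $\sigma^2/\mu^2 = O(q^{J-2k})$, so Chebyshev gives $|\cR_k|/q^J = 1 - o(1)$ only once $k > J/2$. For $n \ge 2$ this is strictly weaker than the conjectured threshold $k > J/(n+1)$, so the second moment alone cannot suffice.

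The substitute I propose is to show directly that $\z\colon (\F_q^n)^k \times \F_q^k \to \F_q^J$ is \emph{dominant} as an algebraic map, i.e.\ its image is Zariski-dense in $\F_q^J$. Once dominance is known, the complement of the image lies (after base change to $\overline{\F_q}$ and descent) in a proper closed subvariety of $\F_q^J$, whose $\F_q$-points number at most $O(q^{J-1})$ by Schwartz--Zippel, giving $|\cR_k| = q^J(1-O(1/q))$ as desired.

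To establish dominance I would compute the Jacobian of $\z$ at a point $(x,y)$ with all $y_i$ nonzero. The column indexed by $y_i$ is the evaluation vector $(x_i^j)_{j \in \cJ}$, and the column indexed by the coordinate $x_{i,t}$ is $y_i$ times the partial-derivative vector $(\partial_{X_t} X^j|_{X=x_i})_{j \in \cJ}$. A nonzero element of the cokernel is therefore a nonzero $\lambda \in \F_q^J$ for which $P_\lambda(x_i) = 0$ and $\nabla P_\lambda(x_i) = 0$ for every $i$: a nonzero polynomial of degree $\le d$ vanishing to order two at all $k$ points $x_1,\ldots,x_k$. Hence the Jacobian has full rank $J$ at a generic $(x,y)$ precisely when $k$ generic double points impose $\min(k(n+1),J)$ independent linear conditions on the space of polynomials of degree $\le d$ in $n$ variables. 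After homogenizing (Euler's identity reconciles the affine and projective count of double-point conditions), this is exactly the content of the \emph{Alexander--Hirschowitz theorem}, which says the expected dimension is achieved whenever $k(n+1) \ge J$ outside an explicit short list of exceptional triples $(n,d,k)$.

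The main obstacles are threefold. First, transferring Alexander--Hirschowitz from $\overline{\F_q}$ to $\F_q$-generic points: since the conclusion holds on a Zariski-open subset defined over $\F_q$, this open subset meets $\F_q^{nk}$ in a $(1-O(1/q))$-fraction of points when $n,d$ are fixed and $q \to \infty$, so this step should be routine. Second, dealing with the finite list of exceptional $(n,d,k)$ for which AH fails; these would require a separate argument (for instance a direct construction or a higher-moment bound) or be absorbed into the $o(1)$ error. Third, and most delicate, converting the qualitative ``dominant'' conclusion into a sharp $q^J(1-o(1))$ count requires careful bookkeeping of the singular locus of $\z$ and of the fiber geometry over $\F_q$; this is where I expect the real technical work to lie.
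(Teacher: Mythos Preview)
The statement you are addressing is labeled a \emph{conjecture} in the paper, and the paper does \emph{not} prove it. What the paper does give is a short discussion of two approaches and why they fall short: (i) the exponential-sum/second-moment argument of \sec{above_threshold} generalizes but only yields the conclusion when $k>J/2$; (ii) an algebraic-geometric route via the affine variety $\cV_k\colon \z(x,y)=z$, where it would suffice to show that for almost all $z$ the fiber $\cV_k(z)$ is absolutely irreducible, since then the Lang--Weil bound guarantees $\F_q$-points. The paper notes that existing Bertini-type results over finite fields (Poonen, Charles--Poonen) do not seem to give this.

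Your first paragraph rediscovers (i) exactly, with the same conclusion and the same threshold $k>J/2$; that part matches the paper. Your second route is genuinely different from the paper's suggestion: rather than aiming at irreducibility of fibers, you compute the Jacobian of $\z$ and observe that full rank at a generic point is equivalent to $k$ general double points imposing independent conditions on degree-$\le d$ polynomials, which is the Alexander--Hirschowitz statement. This is a nice observation that the paper does not make, and it would indeed give dominance (and generic smoothness) of $\z$ at the conjectured threshold $k(n+1)>J$, modulo the known exceptional cases.

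However, your ``third obstacle'' is not a technicality but the entire remaining content of the conjecture. Dominance over $\overline{\F_q}$ says the constructible image contains a Zariski-open set $U$; it does \emph{not} say that the $\F_q$-points of $U$ lie in $\z(\F_q^{k(n+1)})$. The squaring map $x\mapsto x^2$ on $\F_q$ with $q$ odd is dominant with full-rank Jacobian on a dense open, yet misses half of $\F_q^\times$. To get $\F_q$-points in almost every fiber you need precisely what the paper asks for---that the generic fiber is geometrically irreducible (so Lang--Weil applies)---or some substitute rationality argument. Your Jacobian computation gives that generic fibers are smooth of the expected dimension $k(n+1)-J$, but smoothness does not force irreducibility, so you are back to the paper's unresolved step. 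A secondary issue is that Alexander--Hirschowitz is a characteristic-zero theorem; its status over $\overline{\F_q}$ is not automatic, so your ``first obstacle'' is also less routine than you suggest.
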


Note that this holds for $n=1$ (according to \lem{rgood_above_threshold}) and also for $d=1$.  Unfortunately, the approach via exponential sums used in the proof of \lem{rgood_above_threshold} only works if $k > J/2$. Thus, while it gives a tight result for $n=1$, it appears to be inefficient for $n > 1$.

Another way to approach \conj{MultVar} is to consider the affine variety 
\begin{equation}
\cV_k\colon \z(x,y) = z
\end{equation}
in $kn + k + J$ variables $x \in (\FF_q^n)^k$, $y \in \FF_q^k$, $z \in \FF_q^J$. Clearly $|\cV_k(\FF_q)| = q^{kn+k}$. It is not hard to show that $\cV_k$ is a complete intersection and has only one absolutely irreducible component. Thus it suffices to show that for almost all specializations of $z \in \FF_q^J$, the corresponding variety $\cV_k(\vec{z})$ is absolutely irreducible; then provided $k(n+1) > J$, a version of the Lang-Weil bound~\cite{LaWe54} applies and gives the desired result. Although results of this type are known (see~\cite{ChPo16,Poon04} and references therein), unfortunately none of them seems to imply the desired statement. Nevertheless, since a generic variety is absolutely irreducible, the conjecture appears plausible.

\section*{Acknowledgments}

AMC acknowledges support from ARO (grant W911NF-12-1-0482), CIFAR, IARPA, NSF (grant 1526380), and NRO.
WvD acknowledges support from NSF (grant 1314969).
IES acknowledges support from ARC (grant DP140100118).


\begin{bibdiv}
\begin{biblist}[\normalsize]

\bib{BCD05}{inproceedings}{
      author={Bacon, Dave},
      author={Childs, Andrew~M.},
      author={van Dam, Wim},
       title={From optimal measurement to efficient quantum algorithms for the
  hidden subgroup problem over semidirect product groups},
        date={2005},
   booktitle={Proceedings of the 46th IEEE Symposium on Foundations of
  Computer Science},
       pages={469\ndash 478},
      eprint={arXiv:quant-ph/0504083},
}

\bib{BBCMW01}{article}{
      author={Beals, Robert},
      author={Buhrman, Harry},
      author={Cleve, Richard},
      author={Mosca, Michele},
      author={de~Wolf, Ronald},
       title={Quantum lower bounds by polynomials},
        date={2001},
     journal={Journal of the ACM},
      volume={48},
      number={4},
       pages={778\ndash 797},
      eprint={quant-ph/9802049},
}

\bib{BZ13}{inproceedings}{
      author={Boneh, Dan},
      author={Zhandry, Mark},
       title={Quantum-secure message authentication codes},
        date={2013},
   booktitle={Proceedings of Eurocrypt},
       pages={592\ndash 608},
}

\bib{ChPo16} {article}{
    AUTHOR = {Charles, Fran{\c{c}}ois},
     AUTHOR = {Poonen, Bjorn},
     TITLE = {Bertini irreducibility theorems over finite fields},
   JOURNAL =  {Journal of the American Mathematical Society},
    VOLUME = {29},
      YEAR = {2016},
     PAGES = {81\ndash94} 
     }

\bib{CD05}{inproceedings}{
      author={Childs, Andrew~M.},
      author={van Dam, Wim},
       title={Quantum algorithm for a generalized hidden shift problem},
        date={2007},
   booktitle={Proceedings of the 18th\ ACM-SIAM Symposium on Discrete
  Algorithms},
       pages={1225\ndash 1234},
      eprint={arXiv:quant-ph/0507190},
}

\bib{Dam98}{inproceedings}{
      author={van Dam, Wim},
       title={Quantum oracle interrogation: Getting all information for almost
  half the price},
        date={1998},
   booktitle={Proceedings of the 39th IEEE Symposium on Foundations of Computer
  Science},
       pages={362\ndash 367},
      eprint={arXiv:quant-ph/9805006},
}

\bib{DDW07}{article}{
      author={Decker, Thomas},
      author={Draisma, Jan},
      author={Wocjan, Pawel},
       title={Efficient quantum algorithm for identifying hidden polynomials},
        date={2009},
     journal={Quantum Information and Computation},
      volume={9},
      number={3},
       pages={215\ndash 230},
      eprint={arXiv:0706.1219},
}

\bib{FGGS98}{article}{
      author={Farhi, Edward},
      author={Goldstone, Jeffrey},
      author={Gutmann, Sam},
      author={Sipser, Michael},
       title={Limit on the speed of quantum computation in determining parity},
        date={1998},
     journal={Physical Review Letters},
      volume={81},
      number={24},
       pages={5442\ndash 5444},
      eprint={quant-ph/9802045},
}

\bib{GG13}{book}{
  author = {von zur Gathen, Joachim},
  author = {Gerhard, J{\"u}rgen},
  title = {Modern Computer Algebra},
  publisher = {Cambridge University Press},
  year = {2013},
}

\bib{KK11}{article}{
      author={Kane, Daniel~M.},
      author={Kutin, Samuel~A.},
       title={Quantum interpolation of polynomials},
        date={2011},
     journal={Quantum Information and Computation},
      volume={11},
      number={1},
       pages={95\ndash 103},
      eprint={arXiv:0909.5683},
}

\bib{KK90}{article}{
    AUTHOR = {Knopfmacher, Arnold},
    AUTHOR = {Knopfmacher, John},
     TITLE = {Counting polynomials with a given number of zeros in a finite
              field},
   JOURNAL = {Linear and Multilinear Algebra},
    VOLUME = {26},
      YEAR = {1990},
    NUMBER = {4},
     PAGES = {287--292},
}

\bib{LaWe54} {article}{
      author= {Lang, Serge} , 
      author= {Weil, Andr{\'e}},
      title={Number of points of varieties in finite fields},
      journal={American Journal of Mathematics},
      year={1954}
      volume={76},
       pages={819\ndash 827}
}

\bib{MP11}{article}{
      author={Meyer, David~A.},
      author={Pommersheim, James},
       title={On the uselessness of quantum queries},
        date={2011},
     journal={Theoretical Computer Science},
      volume={412},
      number={51},
       pages={7068\ndash 7074},
      eprint={arXiv:1004.1434},
}

\bib{Mon12}{article}{
  author = {Ashley Montanaro},
  title = {The quantum query complexity of learning multilinear polynomials},
  journal = {Information Processing Letters},
  volume = {112},
  number = {11},
  pages = {438--442},
  year = {2012},
  eprint = {arXiv:1105.3310},
}

\bib{PCM88}{article}{
  author = {Gerald M. Pitstick}, 
  author = {Jo{\~a}o R. Cruz},
  author = {Robert J. Mulholland},
  title = {A novel interpretation of {P}rony's method},
  journal = {Proceedings of the IEEE}, 
  volume = {76},
  number = {8},
  pages = {1052--1053}, 
  date = {1988}
}
		
\bib{Poon04} {article}{
    AUTHOR = {Poonen, Bjorn},
     TITLE = {Bertini theorems over finite fields},
   JOURNAL =  {Annals of Mathematics},
    VOLUME = {160},
      YEAR = {2004},
     PAGES = {1099\ndash1127}
}

\bib{RSV02}{article}{
  author = {Jaikumar Radhakrishnan},
  author = {Pranab Sen},
  author = {S. Venkatesh},
  title = {The quantum complexity of set membership},
  journal = {Algorithmica},
  pages = {462\ndash 479},
  date = {2002},
  eprint = {arXiv:quant-ph/0007021},
}

\bib{Sha79}{article}{
      author={Shamir, Adi},
       title={How to share a secret},
        date={1979},
     journal={Communications of the ACM},
      volume={22},
      number={11},
       pages={612\ndash 613},
}

\end{biblist}
\end{bibdiv}


\end{document}